\pgfplotsset{compat=1.18}
\renewcommand\part[1]{\vspace{.10in}\textbf{(#1)}}
\newtheorem{claim}{Claim}
\newtheorem{theorem}{Theorem}
\newtheorem{corollary}{Corollary}
\definecolor{myurlcolor}{rgb}{0,0,0.7}
\definecolor{myrefcolor}{rgb}{0.8,0,0}
\DeclareMathOperator*{\argmax}{arg\,max}
\newtheorem{definition}{Definition}
\title{Compilation of product-formula Hamiltonian simulation via reinforcement learning}
\date{}
\author{\small Lea M. Trenkwalder$^1$, Eleanor Scerri$^2$\small, Thomas E. O'Brien$^{2,5}$, and Vedran Dunjko$^{3,4}$}
\date{\it{ \small$^1$Institute for Theoretical Physics, University of Innsbruck, 6020 Innsbruck, Austria\\ $^2$Instituut-Lorentz, Universiteit Leiden, P.O. Box 9506, 2300 RA Leiden, The Netherlands\\ 
$^3$ Applied Quantum Alogorithms, Universität Leidden\\
$^4$ LIACS, Universiteit Leiden, P.O. Box 9512, 2300 RA Leiden, The Netherlands\\$^5$Google Quantum AI, 80636 Munich, Germany}\\[2ex] \large \today}
\begin{document}
\maketitle

\begin{abstract}Hamiltonian simulation is believed to be one of the first tasks where quantum computers can yield a quantum advantage.
One of the most popular methods of Hamiltonian simulation is Trotterization, which makes use of the approximation $e^{i\sum_jA_j}\sim \prod_je^{iA_j}$ and higher-order corrections thereto.
However, this leaves open the question of the order of operations (i.e. the order of the product over $j$, which is known to affect the quality of approximation).
In some cases this order is fixed by the desire to minimise the error of approximation; when it is not the case, we propose that the order can be chosen to optimize compilation to a native quantum architecture.
This presents a new compilation problem --- order-agnostic quantum circuit compilation --- which we prove is NP-hard in the worst case.
In lieu of an easily-computable exact solution, we turn to methods of heuristic optimization of compilation.
We focus on reinforcement learning due to the sequential nature of the compilation task, comparing it to simulated annealing and Monte Carlo tree search.
While two of the methods outperform a naive heuristic, reinforcement learning clearly outperforms all others, with a gain of around 12\% with respect to the second-best method and of around 50\% compared to the naive heuristic in terms of the gate count.
We further test the ability of RL to generalize across instances of the compilation problem, and find that a single learner is able to solve entire problem families.
This demonstrates the ability of machine learning techniques to provide assistance in an order-agnostic quantum compilation task.
\end{abstract}

\section{Introduction}
\label{sec:into}

The computational speed-ups promised by large-scale quantum computers for solving problems such as factoring \cite{shor1997} and approximate optimization \cite{farhi2014} have led to significant advancements in both experimental and theoretical aspects of quantum computation. However, realizing the original algorithms proposed for quantum computers requires protection from environmental noise using quantum error correction (QEC), which introduces a daunting overhead in terms of the required number of qubits. Efficient and effective compilation of quantum circuits is crucial to avoid unnecessary overhead and maximize device capability. Despite being a relatively young field, several approaches have been studied for synthesizing gates and compiling quantum algorithms \cite{maslov2008, fowler2011, chen2013, chong2017, nam2018, heyfron2018, gokhale2019, duncan2020, camps2020}, employing techniques from machine learning \cite{ferrari2018, cincio2018} or more specifically, reinforcement learning \cite{Moro2021,Zhang2020}, as well as automated planning \cite{venturelli2018, booth2018} and quantum-assisted methods \cite{khatri2019, jones2020}. These efforts aim to enhance the efficiency of quantum circuit compilation and overcome the limitations imposed by current quantum devices.

One of the most popular quantum algorithms is Hamiltonian simulation: the implementation of  $e^{i H t}$ as a unitary on a quantum device for a known Hamiltonian $H$~\cite{lloyd1996universal}.
This problem has attracted a large amount of attention over the years, and various asymptotically optimal or near-optimal Hamiltonian simulation algorithms are known~\cite{childs2012hamiltonian,childs2019theory,low2017optimal,kivlichan2019improved}.
One standard method for Hamiltonian simulation is via product formula methods, most notably those based on the Trotter-Suzuki decomposition~\cite{lloyd1996universal,suzuki1991general,whitfield2011,hastings2014improving,babbush2015chemical,childs2019theory,kivlichan2019improved,heyl2019quantum,zhao2022making,tran2020destructive,hagan2022composite,rendon2022improved}, or randomized variants thereof~\cite{campbell2018random,wan2022randomized}.
These are generally based on the approximation
\begin{equation}\label{eq:product_formula}
    \exp(iHt)=\exp\Big(i\sum_j\alpha_jP_j\Big)\sim \prod_j\exp\Big(i\alpha_jP_j\Big),
\end{equation}
where we have assumed $H$ is written as a linear combination of some operators $P_j$ for which device implementations of $e^{i\theta P_j}$ are known.
For example, one can implement rotation by an arbitrary Pauli operator $P_j\in\mathbb{P}^Q:=\{I,X,Y,Z\}^{\otimes Q}$ on $Q$ qubits by a combination of single-qubit rotations and CNOT gates~\cite{lloyd1996universal,whitfield2011}.
However, when $P_j$ acts on a large number of qubits (which is the case for, e.g., quantum chemistry), the resulting circuit can be dominated by the contribution from `strings' of CNOT gates.
In a naive sequential compilation, this can lead to a significant overhead in gate cost and circuit depth.
Though the additional gates here are Clifford and thus relatively cheap in a fault-tolerant cost model~\cite{kivlichan2019improved}, they are not inconsequential, and significant gains can be obtained by even heuristic optimization of compilation~\cite{hastings2014improving}.

The above optimization is made simultaneously more powerful and more difficult as we have not declared the order for the product over $j$ in Eq.~\eqref{eq:product_formula}.
This order is relevant, as the individual terms no longer commute, and in some cases, significant gains in Trotter error~\cite{tran2020destructive} or non-Clifford gate count~\cite{mukhopadhyay2022synthesizing} can be achieved via proper ordering of terms.
But when this is not the case, we have the freedom to optimize term ordering to minimise circuit depths or gate-counts.
This presents a problem of `order-agnostic circuit compilation', where our compiler must choose the order of operation in addition to the implementation of each operation.
Two natural questions then emerge: how difficult is this problem to solve, and what gain is obtained from an optimal solution?
Reinforcement learning, as a subfield of machine learning~\cite{bookSuttonBarto,bookSuttonBarto2nd}, offers a paradigm for training learning algorithms -- so-called learning agents in the ML vernacular -- to make sequential choices of actions maximizing a given figure of merit.

Over the last few years we have witnessed a rise in the use of reinforcement learning to solve problems in several fields in quantum computing, including combinatorial optimization \cite{mckiernan2019, khairy2019, beloborodov2021}, state preparation \cite{bukov2018, zhang2019, alam2019, mackeprang2020, yao2020} as well as error correction \cite{fosel2018, nautrup2019, colomer2020}.
Due to the sequential nature of program execution, reinforcement learning lends itself naturally to compilation tasks~\cite{Zhang2020, Moro2021}, making it an obvious choice to target the order-agnostic compilation of Trotterized quantum circuits.

In this work, we investigate restricted gate set synthesis with multi-qubit Pauli gate sequences consisting of single-qubit Clifford gates and nearest neighbor CNOT and SWAP gates. First, we prove that already a simpler subset of this problem is in the worst case $\mathsf{NP}$-hard. This motivates our choice to use heuristic optimization and data-driven methods, namely reinforcement learning (RL), as a tool to synthesize the gate sets. We demonstrate that an RL agent successfully solves instances with up to 7 qubits, beating one of the most common approaches used to synthesize such Pauli gates in terms of gate sequence lengths. We compare the performance of the RL agent to simulated annealing and Monte Carlo tree search, observing that, once again, the former generates solutions significantly shorter in mapping gate count. In contrast to the other methods, RL allows solving multiple problem instances at once without any learning overhead due to the generalization capabilities of neural networks.  

This paper is divided as follows. In the next section, Sec.~\ref{sec:gsc}, we introduce the gate set conversion (GSC) problem, discussing two different ways of formulating a solution in terms of native and mapping gates that implement the target gate product in Sec.~\ref{sec:sim} and Sec.~\ref{sec:ind}. We then show that problems of this type are $\mathsf{NP}$-hard in Sec.~\ref{sec:complexity}, by looking specifically at an application using Pauli gate sets. We discuss the GSC problem as a reinforcement learning task in Sec.~\ref{sec:rlgsc}. Before we present the result on various instances of GSC with 4 up to 7 qubits and gate sets with 8 to 16 gates in Sec.~\ref{sec:perform} and present the length of the shortest found solutions in Sec.~\ref{sec:shortest}. The used RL method is further compared to simulated annealing and Monte Carlo tree search in Sec.~\ref{sec:comparison}. Leveraging the generalization capabilities of the employed RL method, we present the results for a single RL agent trained on up to a thousand different target gate sets in Sec.~\ref{sec:converter}.

\section{Gate set conversion}
\label{sec:gsc}

When implementing an algorithm, hardware- or other restrictions may necessitate the conversion of a set of needed gates to a set of available gates. Suppose we are tasked with simulating the following time evolution on a specific hardware:

\begin{align}
\label{eq:timeevo}
 e^{-i\tau \sum_j \alpha_j P_j} \sim \prod_j \exp \left(-i \tau \alpha_j \tilde{t}_j\right)~=\prod_j t_j,
\end{align}
where $\tau$ is the evolution time. The operators $\tilde{t}_j$ are the \textit{target} operators and $t_j=\exp \left(-i \tau \alpha_j \tilde{t}_j\right)$ are target gates forming a set $T$ (which may be part of a larger set which we shall refer to as $T^u$, standing for T-universal) which we are tasked to implement. Due to, e.g., hardware restrictions, however, this may not always be possible. It may only be possible to implement products of \textit{native} gates $n$ (the set of which shall be referred to as $N^u$), and, furthermore, we have access to an additional \textit{mapping} gate set (which we shall refer to as  $M^u$) whose elements or products thereof map native to target gates (and vice versa). Additionally, we assumed $\tau h_j$ to be small enough such that the error in Eq.~\eqref{eq:timeevo} is negligible and the order of target gates in the product is irrelevant. The challenge, as defined in Definition~\ref{def:GSC}, is thus to find the shortest product of native and mapping gates that implements the product of target gate sets. This corresponds to determining a sequence of operators that maps all elements in $T$ to elements in $N$. When such a sequence is obtained, the target set $T$ is said to be resolved. 

\begin{definition}[Gate Set Conversion Problem (GSC)]
\label{def:GSC}
Let $(T,N,M)$ be a tuple, where $T$ is the target set, $N$ is the native set, and $M$ is the mapping gate set. Given an instance of GSC $(T,N,M)$, find the shortest sequence of $(m_1,...,m_{k})$ that resolves $T$, where $m_i \in M$ for $1\leq i\leq k$ and $k$ is some integer. 
\end{definition}

An explicit example of this problem would be simulating evolution under a Hamiltonian $H = \sum_j \alpha_j P_j$, where $P_j \in \mathbb{P}^Q = \{ I, X, Y, Z \}^{\otimes Q}$ are Pauli strings and $Q$ is the number of qubits, and $\alpha_j$ are real values. Thus the goal is to simulate the time evolution according to Eq.~\eqref{eq:timeevo} with $\tilde{t_j}=P_j$. In this example, native gates might have the form $\exp \left(i t~\alpha_{j} P'_{j}\right)$ where $P'_j \in \mathbb{P}^Q$.

Although the methods we study are more general, for concreteness we will focus on the cases where the target gate sets $T$ shall be subsets of $T^u  = \mathbb{P}^Q$, whereas the (smaller) native gate set $N^c$ will consist of Pauli strings of the form $I^{\otimes q} \otimes Z \otimes I^{\otimes Q-q-1}$ for some $q \in \{ 0, 1, ..., Q-1 \}$, or $I^{\otimes q} \otimes Z \otimes Z \otimes I^{\otimes Q-q-2}$ for some $q \in \{ 0, 1, ..., Q-2 \}$. As our mapping gate set, we chose $M^u = \{H, S, CNOT, SWAP \}$, which allows us to construct a naive solution as described in Sec.~\ref{sec:ind}. Note, that we have chosen to consider Pauli operators as our target and native gate sets, as opposed to their exponentials as in Eq.~\eqref{eq:timeevo}, since for any $P_j \in \mathbb{P}^Q$ and product $m$ of elements of $M^u$, $m \exp(i \alpha_j P_j) m^\dagger = \exp(i \alpha_j \tilde{P}_j) \Leftrightarrow m^\dagger P_j m = \tilde{P}_j$ ($\tilde{P}_j \in \mathbb{P}^Q$).

\subsection{Individual conversion}
\label{sec:ind}

To tackle the GSC problem, a straightforward approach is to individually map all the target gates to native gates. An example of this is illustrated in Fig~\ref{fig:sin}, where the native gates are rotation gates $R_z (\theta)$. The individually mapped target gates are then concatenated in the circuit in order to implement a gate of the form Eq.~\eqref{eq:timeevo} (once again assuming that the error from first-order Trotterization is negligible). This means that the product of target gates $t_{j_1} \ldots t_{j_{|T|}}$ can now be implemented by a product of native and mapping gates as

\begin{equation}\label{eq:sin}
\prod^{|T|}_{l=1} t_{j_l} = \prod^{|T|}_{l=1} \left[ (\prod^{i_{l}}_{n=1} v_{j_{l_n}}) n_{k_l} (\prod v_{j_{l_n}})^\dag \right]~,
\end{equation}
where $v_{j_{l_n}} \in V^c$, and $(\prod^{i_{l}}_{n=1}  v_{j_{l_n}})^\dag  t_{j_l}(\prod v_{j_{l_n}}) = n_{k_l} \in N^c$ $\forall l \in \{1, \ldots |T|\}$. We shall henceforth refer to a solution of this form as a \textit{individual} solution.

Strategies for specific target, native, and mapping gate sets exist. For example, unitary gates of the form $\exp \left(i \frac{\theta}{2} P_{j}\right)$ in Eq.~\eqref{eq:timeevo} can be decomposed to rotations $R_z (\theta)$ about the z-axis using CNOT, Hadamard (H) and Phase (S) gates \cite{nielsen2011, whitfield2011, gui2020}. Two strategies, illustrated in Fig~\ref{fig:sin2ways}, rely on computing the parity using the CNOT `cascades' and the necessary basis rotations using the single qubit gates. Since we consider devices with linear connectivity as a restriction, we will focus on the `ladder' strategy, illustrated in Fig.~\ref{fig:sin2ways}b. Henceforth, this approach shall be referred to as the \textit{naive} strategy.
%


\begin{figure}
  \begin{minipage}[htbp]{0.45\textwidth} 
    \includegraphics[width=\linewidth]{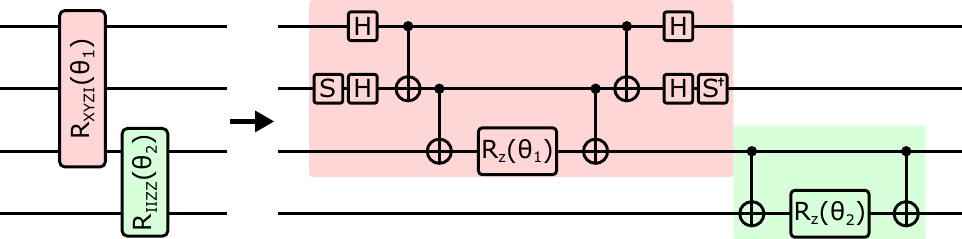}
\caption{Transforming the target gates individually using the `ladder' variant of the transformation illustrated in Fig.~\ref{fig:sin2ways}. Here $R_{XYZI}(\theta_1) = \exp \left(-i \frac{\theta_1}{2} X_1 Y_2 Z_3 \right)$ and $R_{IIZZ}(\theta_2) = \exp \left(-i \frac{\theta_2}{2} Z_3 Z_4 \right)$.}
\label{fig:sin}
  \end{minipage}%
  \hfill
    \begin{minipage}[htbp]{0.5\textwidth}
\includegraphics[width=\linewidth]{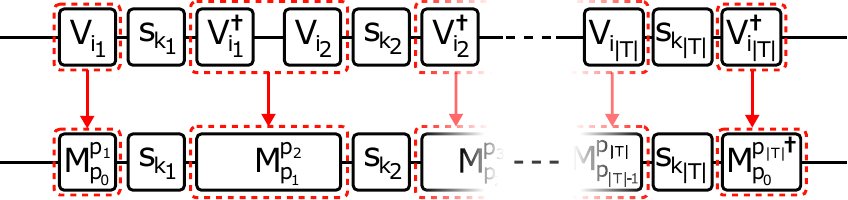}
\caption{Converting a individual solution to a simultaneous one by combining mapping operator products accordingly. Here $V_{i_l} = \prod^{i_l}_{n=1} v_{j_{l_n}}$, and $M^{p_l}_{p_{l-1}} = \prod^{p_l}_{n=p_{l-1}+1} m_n$. Due to cancellations, the final gate product $V_{i_{|T|}}$ can be expressed as a product $\prod^{|T|}_{l=1} M^{p_l}_{p_{l-1}} = M^{p_{|T|}}_{p_0}$.}
\label{fig:sinseq}
  \end{minipage}
\end{figure}


\begin{figure}[t!]
\centering
 \includegraphics[width=0.4\linewidth]{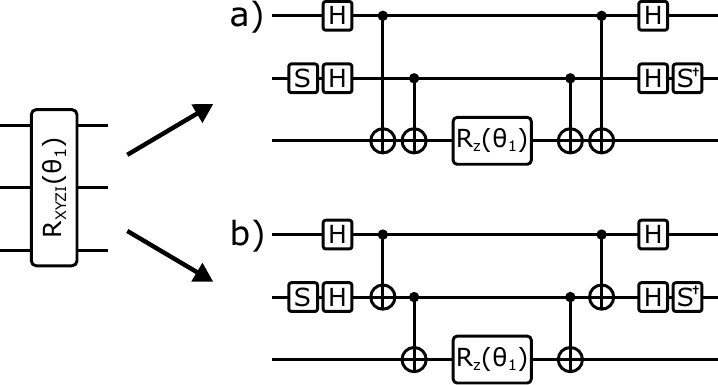}
 \caption{Two ways of decomposing the gate $R_{XYZI}(\theta_1) = \exp \left(-i \frac{\theta_1}{2} X_1 Y_2 Z_3 \right)$, both yielding the same result. While the procedure in a) requires connectivity of all registers to the register the z-rotation is implemented on, the `ladder' strategy in b) only requires connections to the nearest register.}
\label{fig:sin2ways}
\end{figure}

\subsection{Simultaneous conversion}
\label{sec:sim}

\begin{figure}[t!]
\centering
\includegraphics[width=.4\linewidth]{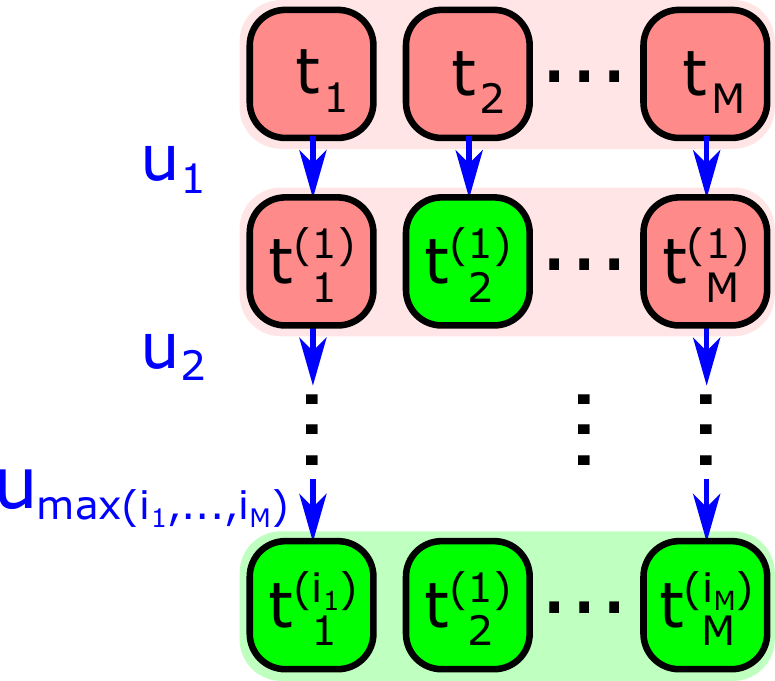}
\caption{Sequentially transforming the target gate set using elements from the mapping gate set (blue) until all elements of the latter are mapped to elements (highlighted in green). Once a target gate has been mapped to a native gate, the gate is not included in subsequent transformations. In this figure, $i_2 = 1$.}
\label{fig:seq}
\end{figure}
 
Another way of tackling the GSC problem is by what we shall refer to as \textit{simultaneous conversion}. In this multi-step approach, at every step, a mapping gate $m$ is applied to all elements in the set $T$ simultaneously. This approach is described by the GSC Algorithm, detailed below, which takes  $(T,N,M)$ as inputs and generates a sequence of mapping gates to transform all elements in $T$ to elements in $N$. The algorithm is illustrated in Fig.~\ref{fig:seq}. The protocol for choosing mapping gates $m$ is described by the conditional probability distribution $\pi^{(k)}$, e.g. mapping gates are chosen uniformly at random. This notation foreshadows the later uses of RL to tackle instances of the GSC problem, where $\pi^{(k)}$ will refer to the policy of an RL agent. In Algorithm \ref{algo:GSC}, when a mapping gate $m$ is said to be applied to a set $T$, the gate $m$ is applied to every element $t \in T$, i.e. $m^\dagger t m$. An element $t \in T$ is said to be mapped to an element in $N$, if a sequence of mapping gates $m$ was applied that transforms the target gate $t$ to a gate $t' \in N$. As soon as an element is mapped to an element in $N$, no further mapping gates are applied. The elements in $T$ are mapped in the order $t_{j_{1}},...,t_{j_{|T|}}$ induced by the chosen sequences of mapping gates. At step $i_{l}$, the element $t_{j_{l}}$ is mapped and we can retrieve that:
\begin{equation}
   n_{q_{l}} = \left(  \prod^{i_{l}}_{n=1} m_n \right)^\dagger t_{j_{l}} \left(  \prod^{i_{l}}_{n=1} m_n \right)
\end{equation}
From this, we can rewrite the product of target gates using the order retrieved by the GSC algorithm:


\begin{align}\label{eq:seq}
\begin{split}
\prod^{|T|}_{l=1} t_{j_l} &= \left(  \prod^{i_1}_{n=1} m_n \right) s_{q_1} \left(  \prod^{i_2}_{n=i_1+1} m_n \right) s_{q_2} \\
&\ldots \left( \prod^{i_{|T|}}_{n=i_{|T|-1}+1} m_n \right) s_{q_{|T|}} \left(\prod^{i_{|T|}}_{n=1} m_n \right) ^\dag \\
&= \left[ \prod^{|T|}_{l=1} \left( \prod^{i_l}_{n=i_{l-1}+1} m_n \right) s_{q_l} \right] \cdot  \left( \prod^{i_{|T|}}_{n=1} m_n \right)^\dag~,
\end{split}
\end{align}
where $i_0 = 0$. The main challenge is to find the shortest sequence mapping gates which, when applied sequentially to some elements of the native gate set, are equivalent to a product of all the target gates we are tasked to implement, since we are assuming the order of the target gates is not important (c.f. Eq.~\eqref{eq:timeevo}). A solution of this form will be referred to as simultaneous solution. 

\sloppy Given an individual solution, it is straightforward to obtain a corresponding simultaneous solution. Given that $(\prod^{i_{l}}_{n=1}  v_{j_{l_n}})^\dag  t_{j_l}(\prod v_{j_{l_n}}) = n_{k_l}$ for $l \in \{ 1, \ldots, |T| \}$, then we can set each product $\prod^{p_l}_{n=p_{l-1}+1} m_n = (\prod^{i_{l-1}}_{n=1} v_{j_{{l-1}_n}})^\dag (\prod^{i_l}_{n=1} v_{j_{l_n}})$ for $l \in \{ 2,\ldots,|T| \}$ and $\prod^{p_1}_{n=1} m_n = \prod^{i_1}_{n=1} v_{j_{1_n}}$, with the possible caveat of having to modify $M^u$ by adding any $v_{j_{l_n}} \notin M^u$. Note that the derived solution is indeed of the simultaneous conversion form since $p_l - p_{l-1} = i_{l-1} + i_l + 1 > 0$, so $p_1 < \ldots < p_{|T|}$. A schematic of this conversion is illustrated in Fig.~\ref{fig:sinseq}. 

We use the total number of mapping gates to compare solutions obtained by either a simultaneous or individual conversion. Further, for simultaneous solutions, we consider solution lengths with and without simplification of the product of mapping gates at the tail of the solution. More specifically, recall that a simultaneous solution has the form

\begin{equation}
\label{eq:seq_recall}
\prod^{|T|}_{l=1} t_{j_{l}} = \left[ \prod^{|T|}_{l=1} \left( \prod^{p_l}_{n=i_{p-1}+1} m_{n} \right) s_{k_{l}} \right] \cdot  \left( \prod^{p_{|T|}}_{n=1} m_{n} \right)^\dag~.
\end{equation}
\sloppy When the simultaneous form of the naive solution is considered, the tail $\left( \prod^{i_{|T|}}_{n=1} m_n \right)^\dag$ cancels, as can be seen in Sec.~\ref{sec:ind}, due to cancellations that occur between each neighboring sub-product of operators, e.g. $\prod^{p_l}_{n=p_{l-1}+1} m_n  \prod^{p_{l+1}}_{n=p_l+1} m_n = (\prod^{i_{l-1}}_{n=1} v_{j_{{l-1}_n}})^\dag (\prod^{i_l}_{n=1} v_{j_{l_n}}) (\prod^{i_l}_{n=1} v_{j_{l_n}})^\dag (\prod^{i_{l+1}}_{n=1} v_{j_{{l+1}_n}}) = (\prod^{i_{l-1}}_{n=1} v_{j_{{l-1}_n}})^\dag (\prod^{i_{l+1}}_{n=1} v_{j_{{l+1}_n}})$. Similarly, for each solution provided by the agent, we search for cancellations occurring between neighboring sub-products in the mapping gate sequence $\prod^{p_{|T|}}_{n=1} m_n$, marked by $p_1, \ldots, p_{|T|}$ in Eq.~\eqref{eq:seq_recall}. This circuit simplification is sketched in Fig.~\ref{fig:seq_cancellations}.


\begin{figure}[t!]
\centering
         \includegraphics[width=.4\linewidth]{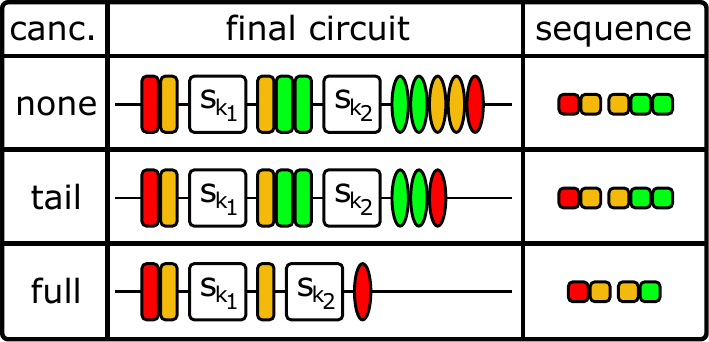}
        \caption{Illustration of the two different circuit simplifications one can perform on the circuit. The different mapping gates are illustrated by colored squares, with their conjugate being circles, whereas the sequence of mapping gates (for example, as suggested by an RL agent as we shall see in Sec.~\ref{sec:rl}). For tail cancellations, as the name implies, the simplifications only occur for the tail of the circuit, i.e. the complex conjugate of the mapping gate sequence. Thus, the effective sequence of operations doesn't change. For full cancellations, the entire circuit is simplified. 
        }
        \label{fig:seq_cancellations}
\end{figure}

\begin{algorithm}[htbp]
\caption{GSC Algorithm}

\textbf{Input} An instance of GSC $(T,N,M)$ (where $T$ and $N$ are w.l.o.g disjoint) is given. A step counter is set to $k=1$, a counter for removed elements is set to $l=1$ and the transformed target set is initialized to $T^{(k)}:= T$. A conditional probability distribution, a so-called policy, $\pi^{(k)}$ is chosen. From this distribution, given the transformed target set, a mapping gate $m$ is sampled. \\
\textbf{Output} A sequence of mapping gate operators $(m_1,\dots,m_K)$. 

\textbf{Procedure}
Repeat until $T^{(k)}=\emptyset$: 
    \begin{enumerate}
                \item A mapping gate $m_k$ is chosen according the policy $\pi^{(k)}$ and applied to the target set $T^{(k)}$, such that $T^{(k+1)}=\{ m_{k}^\dagger t^{(k)} m_{k}|\forall t^{(k)} \in T^{(k)}\}$ and then $k$ is incremented by one.
                \item For every element in $T$, if $t_j^{(k)}$ is equal to an element in $N$, this element is removed from the transformed target set $T^{(k)}:= T \setminus \{t_j^{(k)}\}$. The gate $t_j^{(k)}$ is the $l$-th removed element form the target set $T$, s.t. $t_j^{(k)}=t_{j_{l}}$. The mapping gate sequence $(m_{1},...,m_{i_{l}})$ where $i_{l}=k$ maps $t_j$ to the native $n_{q_{l}}=t_j^{(k)}$ gate $n_{q_{l}}$. If an element was removed $l$ is incremented by one. When all gates in $T$ are mapped to gates in $N$, i.e. $T^{(k)} = \emptyset$ the target set $T$ is resolved in $K=k$ steps.

    \end{enumerate}

\label{algo:GSC}
\end{algorithm}

\section{Computational complexity}\label{sec:complexity}

In this section, we sketch a proof that shows that the gate set conversion problem (GSC) is $\mathsf{NP}$-hard. The detailed proof of the following theorem can be found in the Appendix \ref{app:complexity}.
\begin{theorem}
The gate set conversion problem (GSC) is $\mathsf{NP}$-hard. 
\label{theorem1}
\end{theorem}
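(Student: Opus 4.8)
The plan is to establish hardness via a polynomial-time reduction from a known $\mathsf{NP}$-hard problem to the decision version of GSC, in which we are given an instance $(T,N,M)$ together with an integer $k$ and must decide whether some mapping sequence of length at most $k$ resolves $T$. Since we only need hardness (not membership in $\mathsf{NP}$), it suffices to exhibit one such reduction. Following the remark in the introduction that a \emph{simpler subset} already suffices, I would first restrict to a clean subclass: diagonal ($Z$-type) target strings together with a mapping set limited to nearest-neighbour CNOT and SWAP on the line. Under this restriction the Clifford action linearises --- each Pauli string is a vector in $\mathbb{F}_2^{Q}$, a CNOT acts as an elementary row addition between adjacent qubits, and a SWAP transposes adjacent coordinates --- so ``resolving'' a target to the native set $N^c$ becomes the task of routing and reducing its support to a single $Z$ or an adjacent $ZZ$. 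This turns the gate-count objective into a purely combinatorial routing/layout cost, which is the right setting to import hardness.

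For the reduction itself I would encode each object of the source instance as a target string whose support pins down a required interaction pattern on the line. The natural source, given the order-agnostic freedom (we choose the order in which targets are resolved, cf.\ Eq.~\eqref{eq:seq}) and the linear connectivity, is a tour/ordering problem such as (metric) TSP or Hamiltonian Path: vertices become target strings, and the number of SWAP/CNOT gates needed to pass from having resolved one target to resolving the next is engineered to equal the corresponding edge cost. A resolution order then corresponds to a tour, and the total mapping-gate count equals (up to a fixed additive constant) the tour length, so a sequence of length $\le k$ exists iff the source instance admits a tour of cost $\le k'$. I would construct the gadgets so that the supports of ``consecutive'' targets overlap in a controlled way, making the incremental routing cost additive along the chosen order.

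The two directions of correctness are then: (i) given an optimal tour, read off the induced resolution order and the explicit SWAP/CNOT ladders to build a mapping sequence of the claimed length; and (ii) given a short mapping sequence, extract the order in which targets enter $N^c$ (as in Algorithm~\ref{algo:GSC}) and argue that its cost lower-bounds a tour of the same value. The main obstacle --- and where most of the work lies --- is controlling the \emph{simultaneous} nature of the conversion: a single mapping gate acts on \emph{all} still-unresolved targets at once, so I must design the gadgets so that gates useful for one target cannot ``accidentally'' resolve or cheaply advance several others, which would break the additivity of cost and collapse the tour structure. Concretely, this means choosing qubit layouts and string supports so that cross-interference is either neutral or strictly costly, and fixing the threshold $k$ so that it separates yes- and no-instances exactly. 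Verifying that these gadgets remain polynomial in size and that the optimal GSC cost coincides with the source optimum (rather than merely bounding it) is the crux of the argument.
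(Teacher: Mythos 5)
Your high-level skeleton matches the paper's: pass to a decision version with a length budget and reduce from a Hamiltonian-path-type ordering problem (the paper reduces Hamiltonian Path, via an intermediate variant HPS with a fixed start vertex, to a decision problem GSCD). But there is a genuine gap at exactly the point you flag as ``the crux'' and then defer: controlling simultaneity. In your $\mathbb{F}_2$-linear setting, every nearest-neighbour CNOT acts on \emph{all} unresolved target vectors at once, so a routing sequence intended for one target can cheaply advance, or outright resolve, several others; with a weighted TSP-style threshold, the direction (ii) lower bound requires a global argument that no interleaved gate sequence undercuts the best tour, and nothing in your gadget description (overlapping supports, ``neutral or strictly costly'' interference) establishes this. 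The difficulty is compounded by your choice of native set (all weight-one $Z$'s and adjacent $ZZ$'s), which offers many ``exits'' into $N$ that accidental resolutions can hit, and by the fact that your edge costs are themselves variable, so cost-additivity along an order is an unproven engineering assumption rather than a structural fact. As written, the proposal is a plan whose correctness hinges entirely on gadgets that are not constructed.

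The paper's construction is instructive precisely because it dissolves this problem by a counting argument instead of cost engineering. It takes targets in $\{X,Z\}^{\otimes Q}$, mapping gates that are commuting involutions from $\{I,H\}^{\otimes Q}$ (Hadamard layers, each of unit cost), and a \emph{singleton} native set $N=\{Z^{\otimes Q}\}$, with decision threshold exactly $k=|T|-1$. Since each mapping step is a conjugation, hence a bijection on Pauli strings, and all targets are made pairwise distinct (the ``no spurious weight-1 edge'' pruning in the vertex-labelling step exists for exactly this purpose), a single gate can map at most one element of $T$ to $Z^{\otimes Q}$; the tight budget then forces every step to resolve exactly one target, and the resolution order literally traces a path in the graph whose vertices are the labels and whose edges are the unit-cost Hadamard layers. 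Thus edges never need engineered weights, the yes/no threshold needs no separation analysis, and cross-interference is impossible by bijectivity rather than by gadget design. If you want to salvage your CNOT/SWAP route, you would need an analogous structural mechanism (e.g.\ a singleton or otherwise collision-proof native set plus a budget that forces one resolution per step), not merely supports chosen ``so that cross-interference is neutral.''
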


\begin{proof}
To prove this, we define a variant of the Hamiltonian path problem called Hamiltonian path with a starting vertex (HPS). In the following, the three problem classes HP, HPS, and GSC are defined in detail:

\begin{itemize}
\item \textsf{Hamiltonian path problem (HP)}:

Given an unweighted, undirected graph $G = (V, E)$, does $G$ have a Hamiltonian path, which is a sequence of edges that joins a sequence of vertices, with no vertex in the sequence repeated? This problem is already known to be $\mathsf{NP}$-hard \cite{karp_reducibility_1972}.

\item \textsf{Hamiltonian path problem with a starting vertex (HPS)}: 

Given an unweighted, undirected graph $G' = (V', E')$ and a node $s$, does $G'$ have a Hamiltonian path starting from vertex $s$?  

\item \textsf{Gate Set Conversion Decision Problem (GSDC)}:

Let $(T,N,M,k)$ be a tuple, where $T \subseteq T^u \coloneqq \{ X, Z \}^{\otimes Q}$ is the target set, $N \coloneqq \{ Z^{\otimes Q} \}$ is the native set, $M \subseteq M^u \coloneqq \{ I, H \}^{\otimes Q}$ the mapping gate set, $m_i \in M$ is a mapping gate with $i \in \{1,...,k\}$, and $k$ is an integer. Given an instance of GSCD $(T,N,M,k)$, can $T$ be resolved by a sequence of mapping gates $(m_1,...,m_{k})$ of length $k = |T|-1$?

\end{itemize}

We first show that there exists a polynomial time reduction from the Hamiltonian path problem (HP) to HPS. Then, we prove that there exists a polynomial time reduction from the HPS to GSCD to prove that GSC is $\mathsf{NP}$-hard.
\end{proof}
From Theorem~\ref{theorem1}, we can deduce the following corollary.
\begin{corollary}
The gate set conversion decision problem (GSCD) is $\mathsf{NP}$-complete.
\label{cor:GSCD}
\end{corollary}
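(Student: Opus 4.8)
The plan is to verify the two defining ingredients of $\mathsf{NP}$-completeness separately: $\mathsf{NP}$-hardness and membership in $\mathsf{NP}$. Hardness comes for free from Theorem~\ref{theorem1}: its proof reduces \textsf{HP} to \textsf{HPS} and then \textsf{HPS} to \textsf{GSCD} in polynomial time, so since \textsf{HP} is $\mathsf{NP}$-hard, the decision problem \textsf{GSCD} inherits $\mathsf{NP}$-hardness directly (the chain is $\textsf{HP} \le_p \textsf{HPS} \le_p \textsf{GSCD}$). Hence the entire content of the corollary reduces to establishing $\textsf{GSCD} \in \mathsf{NP}$.

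First I would fix the certificate: for a yes-instance $(T,N,M,k)$ the witness is simply the resolving sequence $(m_1,\dots,m_k)$ with $k = |T|-1$, each $m_i \in M \subseteq \{I,H\}^{\otimes Q}$ being describable by $O(Q)$ bits (one bit per qubit recording whether $H$ or $I$ acts there). The certificate thus has size $O(|T|\,Q)$, polynomial in the input length. The verifier then runs Algorithm~\ref{algo:GSC} with the policy fixed deterministically to emit $m_k$ at step $k$, and accepts iff the target set is resolved, that is $T^{(K)}=\emptyset$, within these $k$ steps.

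The crucial point is that each step of this simulation is polynomial. Every element of $T \subseteq \{X,Z\}^{\otimes Q}$ is stored as a length-$Q$ bit string (say $0$ for $Z$ and $1$ for $X$ on each qubit), with the native target $Z^{\otimes Q}\in N$ being the all-zeros string. Since conjugation by an element of $\{I,H\}^{\otimes Q}$ acts qubit-wise via $HXH=Z$, $HZH=X$ (with $I$ acting trivially) and produces no signs within this restricted set, applying $m_k$ simply flips the bits where $m_k$ carries an $H$. Updating all current target strings and testing each against the all-zeros string therefore costs $O(|T|\,Q)$ per step, so the whole verification runs in $O(|T|^2 Q)$ time, exactly mirroring the removal step of Algorithm~\ref{algo:GSC}.

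I do not anticipate a substantive obstacle, as this is essentially a corollary: the hardness is handed to us by Theorem~\ref{theorem1}, and what remains is a routine verifier construction. The only place demanding care is the claim that the conjugation can be tracked efficiently \emph{and} exactly. In full generality this is guaranteed by the stabilizer (Gottesman--Knill) formalism, at the cost of also tracking Pauli signs in the binary symplectic representation; the subtlety worth checking is that in the \textsf{GSCD} setting -- strings from $\{X,Z\}^{\otimes Q}$ conjugated only by $\{I,H\}^{\otimes Q}$ -- the action is sign-free and collapses to bit flips, so the bit-string simulation above is both exact and polynomial. Confirming this closes the membership argument and, together with the inherited hardness, proves that \textsf{GSCD} is $\mathsf{NP}$-complete.
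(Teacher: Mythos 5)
Your proposal is correct and takes essentially the same route as the paper: $\mathsf{NP}$-hardness is inherited from Theorem~\ref{theorem1}, and membership in $\mathsf{NP}$ is certified by the mapping gate sequence itself, verified by simulating its application to $T$ --- which is precisely the paper's Algorithm~\ref{algo:GSCD}, stated there with complexity $O(K|T|)$. Your explicit check that conjugating $\{X,Z\}^{\otimes Q}$ strings by $\{I,H\}^{\otimes Q}$ gates is sign-free and reduces to qubit-wise bit flips merely spells out the per-step cost that the paper leaves implicit.
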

\begin{proof}
    Since the GSCD is already proven to be $\mathsf{NP}$-hard, it remains to show that GSC is in $\mathsf{NP}$. The Algorithm~\ref{algo:GSCD} in Appendix~\ref{app:complexity} takes an instance $(T,N,M,K)$ as input and has a time complexity in the order of $O(K|T|)$. Thus, it allows for an efficient check of whether a given sequence resolves T showing that GSC is in $\mathsf{NP}$. 
\end{proof}

Given the hardness of the problem, we must resolve to approximate and heuristic algorithms to solve instances of GSCD.

\newpage

\section{Reinforcement Learning}
\label{sec:rl}


In reinforcement learning (RL) learning, the goal of a so-called learning agent is to adapt its behavior to maximize a given figure of merit \cite{bookSuttonBarto,bookSuttonBarto2nd}. The interaction between an RL agent and its environment can be mathematically described by a Markov Decision Process (MDP) \cite{Bellman1957}. An MDP is a 5-tuple $(S,A,s_0,R,P)$, where $S$ is the state space, i.e., the set of possible environmental states, $A$ the action space, i.e., the set of possible actions the agent can take, $s_0$ a starting state, $R: S\times A\rightarrow \mathbb{R}$ a reward function and $\mathrm{P}: S \times A \times S \rightarrow [0,1]$ is a transition function, i.e., the function that specifies the probability of transitioning to state $s'$, if in state $s$ the action $a$ was performed. At each time step $t$, the agent takes an action $a \in A$ and receives information about the environment in the form of a state $s\in S$ and a reward $r\in \mathbb{R}$. An \textit{episode} comprises all interactions between an agent and its environment until a termination condition is fulfilled. A standard figure of merit in such a scenario is the expected return: 
 \begin{equation}
    G_t= \sum_{k=t+1}^{\infty}\gamma^{k-t-1}r_k,
\end{equation}
where $r_k$ is the reward obtained at the k-th time step in the episode and $\gamma \in [0,1)$ is a discount factor that weights the contribution of future rewards. The reward $r_k$ is chosen to be zero for all $k$ after the termination of the episode. 
Assuming this figure of merit, each state and action pair $(s,a)$ can be assigned an action-value that quantifies the expected return starting from a state $s$ in step $t$ taking action $a$ and subsequently following \textit{policy} $\pi$: 
\begin{equation}
    q_\pi(s,a)=\mathbb{E}_{\pi}\left[ G_t|s, a \right]
\end{equation}
The behavior of a learning agent maximizing such a figure of merit is described by a conditional probability distribution called policy $\pi(a|s)$. The goal is to find an \textit{optimal} policy, i.e., a policy with a greater or equal expected return compared to all other policies for all states. The optimal policy can be derived from the optimal action-value function $q_*$. The Bellman optimality equation can be derived from the recursive relationship between the value of the current state and the next state:
\begin{equation}
   q_*(s,a)= \mathbb{E}\left[r_{t+1}+\max_{a'}q_*(s_{t+1},a')|s, a \right] 
\end{equation}
The solution of the Bellman optimality equation is an optimal policy. Instead of solving this equation analytically, in value-based RL, the goal is to derive the optimal action-value function from learned values estimated using data samples.
A well-known example of a value-based RL algorithm is $Q$-learning \cite{Watkins1989}, where each state-action pair $(s,a)$ is assigned a so-called $Q$-value $Q(s,a)$, which is updated to approximate $q_*$. Starting from an initial guess for all values $Q(s,a)$, the values are updated for each state-action pair $(s,a)$ while the agent interacts with the environment according to the following update rule:   
\begin{equation}
    Q(s,a)\leftarrow Q(s,a)+\alpha \left(r+\gamma \max_{a'} Q(s',a')-Q(s,a)\right),
\end{equation}
where $\alpha$ is the learning rate and $s'$ is the next encountered state after taking action $a$ in state $s$. 
The data for updates is sampled from the agent's policy. Thus, to guarantee learning the policy derived from the $Q$-values needs to be sufficiently explorative. 
A common choice is the $\epsilon$-greedy policy that, given the right parameters, guarantees exploration in the beginning and exploitation in the later stages of training: 
\begin{equation}
 \pi(a|s)= \begin{cases}
1-\epsilon_t \quad \text{for}\quad a=\argmax_{a'}Q(s,a')  \\
\epsilon_t \quad \text{otherwise}, \\
\end{cases}  
\end{equation}
where the parameter $\epsilon_t$ balances exploration and exploitation and is adapted over time.  

All $Q$-values can be stored in a table where the columns represent all actions and the rows represent all states. However, when the state space is large, storing the values in a table and updating them individually becomes infeasible. Instead, the entire action-value function can be approximated. In the following section, we describe how $Q$-learning can be extended to large state spaces using neural networks (NNs) as function approximators.

\subsection{Double Deep Q-learning}
In our work, we will utilize the method Double Deep Q-learning (DDQN). We made this choice as it has been successfully applied in other physics-inspired environments for example to optimize ansatzes for variational quantum circuits \cite{Ostaszewski2021} and in the future may benefit from quantum enhancements \cite{Jerbi2021}. 
DDQN is based on its predecessor Deep Q-learning (DQN), which is based on two essential methods for training neural networks (NN) in RL tasks.  
First, experience replay, a method to turn the sequential reinforcement learning data into the independently and identically distributed data required for NN training. In experience replay, the NN is trained with batches of experiences consisting of single-episode updates that are randomly sampled from a memory. 
Further, the NN training is stabilized by employing two NNs, a policy network, that is continuously updated, and a target network that is an earlier copy of the policy network. The policy network is used to estimate the current value, while the target network is used to provide a stable target value $Y$:
\begin{equation}
Y_\text{DQN}= r+\gamma \max_{a'} Q_\text{target}(s',a')
\end{equation}
In DQN, the policy network network is used to estimate the action values, which can lead to an overestimation bias resulting in unstable learning and a suboptimal policy. This is due to the maximization step over the action values in the term $\max_{a'} Q(s', a')$.
This issue is overcome in DDQN where the Q-function estimation is decoupled from the action selection. The target network is used for action-value selection and the policy network for action selection, each functioning as an independent estimator to reduce the maximization bias. 
\begin{equation}
  Y_\text{DDQN}= r+\gamma Q_\text{target}(s', \argmax_{a'} Q_\text{policy}(s',a')). 
\end{equation}
This target value will be approximated using a chosen loss function.


\subsection{Gate set conversion as a reinforcement learning problem}\label{sec:rlgsc}

Given an instance $(T,N,M)$ of the GSC, the state space of the corresponding MDP is the power set of the set of all Pauli strings $\mathbb{P}^Q$. The target gate set $T$ is the starting state $s_0=\{t_0,...,t_{|T|}\}$ of the environment. Then, in correspondence with the Algorithm \ref{algo:GSC}, the goal is to transform the state $s_0$ by applying a sequence of mapping gates $m$ until all elements of the set are mapped to elements in $N$ and removed from the state $s$. Thus, each state $s$ of the environment is a set of Pauli operators. Each action corresponds to the application of one mapping gate $m \in M$ to all elements of the current state $s$: 
\begin{equation}
\label{eq:action}
\begin{split}   
    a:\, S& \times M \rightarrow S\\
      (s&,m)\mapsto s'=\{t'|t \in s \text{ such that } t'=m^{\dagger}t m \}
\end{split}
\end{equation}
The transition function describes the transition from the current state $s$ to the next state $s'$: 
\begin{equation}
\label{eq:transition}
\begin{split}   
    f:\, S& \times A \rightarrow S\\
      (s&,a)\mapsto s'=\{t'|t \in s \text{ such that } t'=m^{\dagger}t m \text{ and } m^{\dagger}t m \notin N \}
\end{split}
\end{equation}

In this problem formulation, the goal of transforming all elements in $T$ to elements in $N$ can be simplified to transforming the state $s$ into the empty set. This goal can be translated into a binary reward function for the RL task. 

\begin{equation}
\label{eq:reward_dense}
\begin{split}   
    f:\, S \rightarrow \mathbb{R}\\
      (s)\mapsto \begin{cases}
  1 &\text{if } {s=\emptyset} \\
   0 &\text{otherwise} 
\end{cases}
\end{split}
\end{equation}

To facilitate learning in larger state spaces, we amend the binary reward with two additional terms for a denser reward landscape to increase sample efficiency. The resulting reward function is described by: 

\begin{equation}
\label{eq:reward}
\begin{split}   
    f:\, S& \times S \rightarrow \mathbb{R}\\
      (s&,s')\mapsto \begin{cases}
   d\cdot D+|s|-|s'| &\text{if } |s|>|s'| \\
   d\cdot D-C &\text{otherwise.} 
\end{cases}
\end{split}
\end{equation}
A common reward-shaping strategy is to add a constant negative reward for each time step. We scale this negative reward using the hyperparameter $C$.
The second hyperparameter $D$ is introduced to scale an additional reward. This reward is proportional to the difference $d=\sigma(s)-\sigma(s')$ of the \textit{similarity} of the current state and the next state to the native gate set.  
The distance $\sigma(s)$ of the state $s$ quantifies the similarity of the goal state to the native gate set and also constitutes a hyperparameter. To define the distance, we introduce a notion of an overlap between sets. Each environmental state can be described by a set of Pauli strings $\{P_{i}\}_{i=1}^l$ with $l\leq |T|$. First, we define the overlap of two Pauli strings $o(P_i,P_j)=Q-w(P_iP_j)$, where $w$ is the weight, which corresponds to the number of non-identity terms in the product of Pauli strings \cite{gottesman1997stabilizer}. Next, we can calculate the largest overlap between a single Pauli string and the native set $N=\{P_j\}_{j=1}^{|N|}$ of size $|N|$ using $o_\text{max}^N(P_i)=\max_{P_j \in N} o(P_i,P_j)$. Now, the overlap between the entire state $s$ and the native set is given by the sum over all largest overlaps $\rho^N(s)=\sum_{i=1}^{l}o^N_\text{max}((P_i)_{i=1}^Q)$. For example, if the current state is $s=\{XXIIYZ,IIIZXI\}$ then, with respect to a native set $N=\{IIIZZI,IIIIZZ\}$, the largest overlap for the Pauli string $XXIIYZ$ in $s$ is $o_{\max}^{N}(XXIIYZ)= 3$ and the largest overlap for the other Pauli string in $s$ is $o_{\max}^{N}(IIIZXI)=5$. This leads to an overlap between $s$ and the native gate set $N$ of $\sigma^{N}(s)=8$. The corresponding reward function that uses this additional shaped reward is used in the experiments discussed in the following section.

\section{Results}\label{sec:results}

\subsection{Learning performance}\label{sec:perform}
In this section, we present and analyze the numerical results of the learning performance under a varying target set size $|T|$ and qubit number $Q$. In the first set of experiments, the number of qubits is fixed to $Q=4$, while the target set size is chosen from $|T|\in\{8,12,16\}$. The learning performance in terms of the average mapping gate count $\overline{A_g}$ of the simultaneous solution during training is shown in Figure \ref{fig:training_target}. In a second set of experiments, the size of the target set is fixed to $|T|=8$ and the qubit number is chosen from $Q\in\{4,5,7\}$. The corresponding average mapping gate count $\overline{A_g}$ during training is shown in Figure \ref{fig:training_qubit}. The average is taken over 50 agents learning to solve the same instance of a GSC problem $(T,N,M)$. The error for the average count during training is estimated by the corresponding standard deviation. In both learning performance figures, a line in the same color as the learning performance indicates the length of the corresponding naive individual solution $N_\text{ind}$. This shows that the average learned performance at the end of the training lies below the naive individual solution. The state space $S$ of the RL task grows exponentially with the number of qubits. Given a qubit number, the size of the state space is determined by the number of combinations of Pauli strings of length smaller or equal to $|T|$. Thus, the state space grows polynomial with the size of the target set and the shape of the polynomial depends on the qubit number. Additionally, with each added element in the target set, the number of elements to be removed grows, but the reward density increases at the same time, alleviating part of the complexity of the learning problem. Thus, the agent's performance scales more favorably with increasing target set size. The hyperparameters chosen for the experiments are detailed in Appendix \ref{app:resultsrl}.  

\vspace{1.5cm}


 



\begin{figure}[htbp]
  \begin{minipage}{0.45\textwidth} 
    \includegraphics[width=\linewidth]{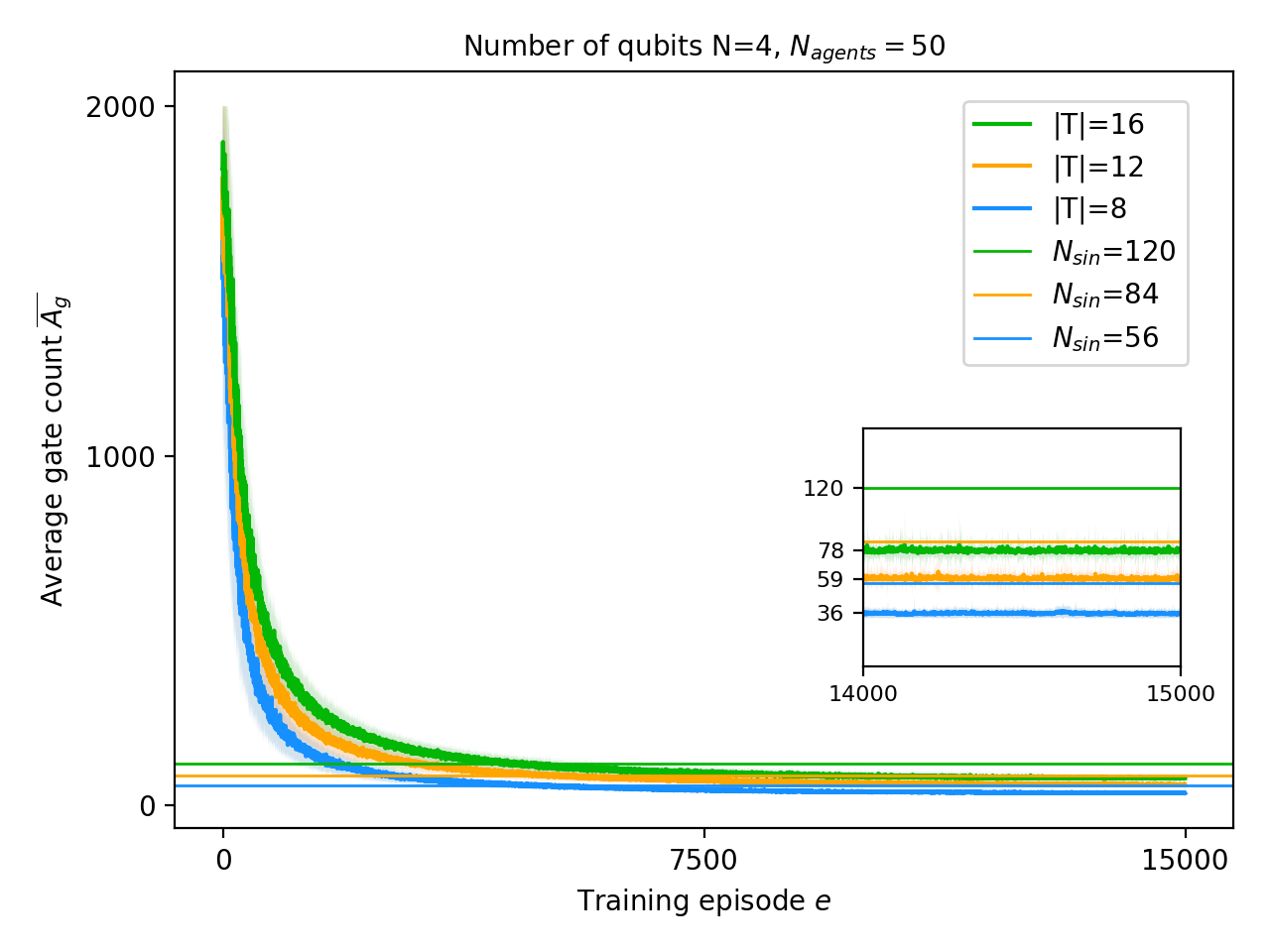}
    \caption{The agent's performance in terms of average gate count $\overline{A_g}$ for 50 agents during training at each episode for a single instance of GSC with 4-qubit gates and target gate set sizes $|T|\in \{8,12,16\}$. The same color line indicates the gate count of the corresponding naive individual solution $N_\text{ind}$. The error, indicated by the shaded area, is given by the standard deviation with a cutoff at the maximum gate count of 2000 and the lowest gate count obtained in all runs. }
    \label{fig:training_target}
  \end{minipage}%
  \hfill
    \begin{minipage}{0.45\textwidth}
    \includegraphics[width=\linewidth]{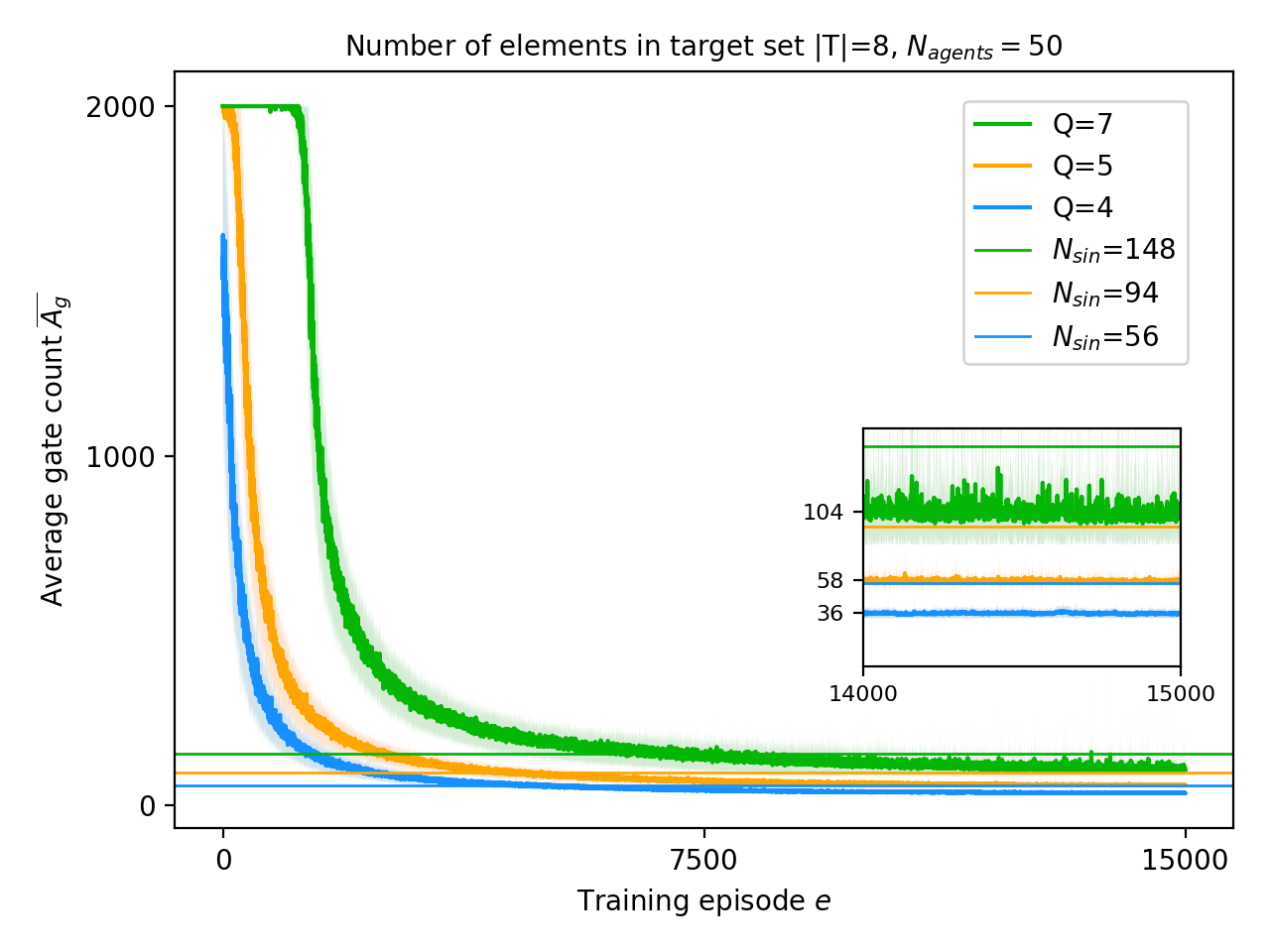}
    \caption{The agent's performance in terms of average gate count $\overline{A_g}$ for 50 agents during training at each episode for a single instance of GSC with target set size $|T_0|=8$ and qubit number $Q\in\{4,5,7\}$. The same color line indicates the gate count of the corresponding naive individual solution $N_\text{ind}$. The error, indicated by the shaded area, is given by the standard deviation with a cutoff at the maximum gate count of 2000 and the lowest gate count obtained in all runs.}
    \label{fig:training_qubit}
  \end{minipage}

  \end{figure}

\subsection{Shortest solution}\label{sec:shortest}

In solving a GSC problem instance, the aim is to find the shortest sequence of mapping gates that transforms all elements in the target set $T$ to elements in the native set $N$. Thus, in this section, we report the shortest obtained solutions for experiments with 4-qubit operators with target set sizes $|T|=8$, $|T|=12$, and $|T|=16$. In Table~\ref{tab:DDQN-all-t}, we provide the results for 3 agents trained on 5 different target gate sets. The shortest obtained solutions for experiments with operators on 4 to 7 qubits with a target set size of $|T|=8$ are shown in Table~\ref{tab:DDQN-all-q}. For each target set $T$ and agent $i\in\{1,2,3\}$, the lowest mapping gate count $A_{g_{i}}^{f}$, as well as the lowest mapping gate count taking into account full cancellations $A_{g_{i}}^{c}$, as defined in Sec.~\ref{sec:ind} are shown. From these results, we can see that the solutions obtained by the RL agents, only slightly reduce under full cancellations compared to the learned solutions. This is clearly not the case for the naive solution with and without these cancellations, as the simultaneous form is almost twice as long as the individual solution. Nonetheless, the agents find solutions that are significantly shorter than the naive solution in all experiments. 


\begin{table}[htbp] 
    \caption{Reinforcement learning results for 3 separate agents ($Ag_{1-3}$) tested on 5 different sets of target operators (indexed by the column $T$) on $4$ qubits with target set sizes $|T|=8$, $|T|=12$, and $|T|=16$, where $Ag^f_i$ and $Ag^c_i$ are the shortest solutions found by the $i^{\mathrm{th}}$ agent without and with full cancellations respectively. Both values are denoted in percent of the naive individual solution $N_{sim}$. For comparison, the number of mapping gates in the naive simultaneous ($N_{sim}$) and individual ($N_{ind}$) solutions are provided, which are averaged over 100 random orderings of the target gate set. Note that for the naive individual solution, the order of the operators in the target set is irrelevant.}\label{tab:DDQN-all-t}
    \begin{minipage}{\textwidth}
    \centering
        \subcaption{The lowest mapping gate countfor $|T|=8$.} 
        \pgfplotstabletypeset[
        col sep=semicolon,
        string type,
        columns={seed, naivelen, naivelenOG, actionlena, actionlenb, actionlenc,  actionlenatail, actionlenbtail, actionlenctail},
        columns/seed/.style={column name=\textbf{T}, column type={|c|} },
        columns/naivelen/.style={column name=$\mathbf{N_{sim}}$},
        columns/actionlena/.style={column name=$\mathbf{Ag^f_1}$},
        columns/actionlenb/.style={column name=$\mathbf{Ag^f_2}$},
        columns/actionlenc/.style={column name=$\mathbf{Ag^f_3}$, column type={|c|} },
        columns/naivelenOG/.style={column name=$\mathbf{N_{ind}}$,column type={|c|}},
        columns/actionlenatail/.style={column name=$\mathbf{Ag^c_1}$},
        columns/actionlenbtail/.style={column name=$\mathbf{Ag^c_2}$},
        columns/actionlenctail/.style={column name=$\mathbf{Ag^c_3}$},
        every head row/.style={before row={\hline},after row=\hline\hline},
        every last row/.style={after row={\hline}},
        every nth row={1}{before row=\hline},
        column type/.add={|}{},
        every last column/.style={column type/.add={}{|}},
        ]{TablesNew/table_4q_8t_table.csv}

    \end{minipage}\\[3ex]

    \begin{minipage}{\textwidth}
    \centering
        \subcaption{The lowest mapping gate countfor $|T|=12$.}
             \pgfplotstabletypeset[
        col sep=semicolon,
        string type,
        columns={seed, naivelen, naivelenOG, actionlena, actionlenb, actionlenc,  actionlenatail, actionlenbtail, actionlenctail},
        columns/seed/.style={column name=\textbf{T}, column type={|c|} },
        columns/naivelen/.style={column name=$\mathbf{N_{sim}}$},
        columns/actionlena/.style={column name=$\mathbf{Ag^f_1}$},
        columns/actionlenb/.style={column name=$\mathbf{Ag^f_2}$},
        columns/actionlenc/.style={column name=$\mathbf{Ag^f_3}$, column type={|c|} },
        columns/naivelenOG/.style={column name=$\mathbf{N_{ind}}$,column type={|c|}},
        columns/actionlenatail/.style={column name=$\mathbf{Ag^c_1}$},
        columns/actionlenbtail/.style={column name=$\mathbf{Ag^c_2}$},
        columns/actionlenctail/.style={column name=$\mathbf{Ag^c_3}$},
        every head row/.style={before row={\hline},after row=\hline\hline},
        every last row/.style={after row={\hline}},
        every nth row={1}{before row=\hline},
        column type/.add={|}{},
        every last column/.style={column type/.add={}{|}},
        ]{TablesNew/table_4q_12t_table.csv}

    \end{minipage}\\[3ex]

        \begin{minipage}{\textwidth}
        \centering
        \subcaption{The lowest mapping gate countfor $|T|=16$.}
           \pgfplotstabletypeset[
        col sep=semicolon,
        string type,
        columns={seed, naivelen, naivelenOG, actionlena, actionlenb, actionlenc,  actionlenatail, actionlenbtail, actionlenctail},
        columns/seed/.style={column name=\textbf{T}, column type={|c|} },
        columns/naivelen/.style={column name=$\mathbf{N_{sim}}$},
        columns/actionlena/.style={column name=$\mathbf{Ag^f_1}$},
        columns/actionlenb/.style={column name=$\mathbf{Ag^f_2}$},
        columns/actionlenc/.style={column name=$\mathbf{Ag^f_3}$, column type={|c|} },
        columns/naivelenOG/.style={column name=$\mathbf{N_{ind}}$,column type={|c|}},
        columns/actionlenatail/.style={column name=$\mathbf{Ag^c_1}$},
        columns/actionlenbtail/.style={column name=$\mathbf{Ag^c_2}$},
        columns/actionlenctail/.style={column name=$\mathbf{Ag^c_3}$},
        every head row/.style={before row={\hline},after row=\hline\hline},
        every last row/.style={after row={\hline}},
        every nth row={1}{before row=\hline},
        column type/.add={|}{},
        every last column/.style={column type/.add={}{|}},
        ]{TablesNew/table_4q_16t_table.csv}

    \end{minipage}%
   
\end{table}


\begin{table}[htbp]
    \caption{Reinforcement learning results for 3 separate agents ($Ag_{1-3}$) tested on 5 different sets of target operators (indexed by the column $T$) with varying qubit and target set sizes, where $Ag^f_i$ and $Ag^c_i$ are the shortest solutions found by the $i^{\mathrm{th}}$ agent without and with full cancellations respectively. Both values are denoted in percent of the naive individual solution $N_{ind}$. For comparison, the number of mapping gates in the naive simultaneous ($N_{sim}$) and individual ($N_{ind}$) solutions are provided, which are averaged over 100 random orderings of the target gate set. Note that for the naive individual solution, the order of the operators in the target set is irrelevant.}
     \label{tab:DDQN-all-q}
       
        \begin{minipage}{\textwidth}
    \centering
          \subcaption{The lowest mapping gate countfor $Q=4$.}
        \pgfplotstabletypeset[
        col sep=semicolon,
        string type,
        columns={seed, naivelen, naivelenOG, actionlena, actionlenb, actionlenc,  actionlenatail, actionlenbtail, actionlenctail},
        columns/seed/.style={column name=\textbf{T}, column type={|c|} },
        columns/naivelen/.style={column name=$\mathbf{N_{sim}}$},
        columns/actionlena/.style={column name=$\mathbf{Ag^f_1}$},
        columns/actionlenb/.style={column name=$\mathbf{Ag^f_2}$},
        columns/actionlenc/.style={column name=$\mathbf{Ag^f_3}$, column type={|c|} },
        columns/naivelenOG/.style={column name=$\mathbf{N_{ind}}$,column type={|c|}},
        columns/actionlenatail/.style={column name=$\mathbf{Ag^c_1}$},
        columns/actionlenbtail/.style={column name=$\mathbf{Ag^c_2}$},
        columns/actionlenctail/.style={column name=$\mathbf{Ag^c_3}$},
        every head row/.style={before row={\hline},after row=\hline\hline},
        every last row/.style={after row={\hline}},
        every nth row={1}{before row=\hline},
        column type/.add={|}{},
        every last column/.style={column type/.add={}{|}},
        ]{TablesNew/table_4q_8t_table.csv}

    \end{minipage}\\[3ex]
        \begin{minipage}{\textwidth}
    \centering
          \subcaption{The lowest mapping gate countfor $Q=5$.}
        \pgfplotstabletypeset[
        col sep=semicolon,
        string type,
        columns={seed, naivelen, naivelenOG, actionlena, actionlenb, actionlenc,  actionlenatail, actionlenbtail, actionlenctail},
        columns/seed/.style={column name=\textbf{T}, column type={|c|} },
        columns/naivelen/.style={column name=$\mathbf{N_{sim}}$},
        columns/actionlena/.style={column name=$\mathbf{Ag^f_1}$},
        columns/actionlenb/.style={column name=$\mathbf{Ag^f_2}$},
        columns/actionlenc/.style={column name=$\mathbf{Ag^f_3}$, column type={|c|} },
        columns/naivelenOG/.style={column name=$\mathbf{N_{ind}}$,column type={|c|}},
        columns/actionlenatail/.style={column name=$\mathbf{Ag^c_1}$},
        columns/actionlenbtail/.style={column name=$\mathbf{Ag^c_2}$},
        columns/actionlenctail/.style={column name=$\mathbf{Ag^c_3}$},
        every head row/.style={before row={\hline},after row=\hline\hline},
        every last row/.style={after row={\hline}},
        every nth row={1}{before row=\hline},
        column type/.add={|}{},
        every last column/.style={column type/.add={}{|}},
        ]{TablesNew/table_5q_8t_table.csv}

    \end{minipage}\\[3ex]
        \begin{minipage}{\textwidth}
    \centering
          \subcaption{The lowest mapping gate countfor $Q=6$.}
        \pgfplotstabletypeset[
        col sep=semicolon,
        string type,
        columns={seed, naivelen, naivelenOG, actionlena, actionlenb, actionlenc,  actionlenatail, actionlenbtail, actionlenctail},
        columns/seed/.style={column name=\textbf{T}, column type={|c|} },
        columns/naivelen/.style={column name=$\mathbf{N_{sim}}$},
        columns/actionlena/.style={column name=$\mathbf{Ag^f_1}$},
        columns/actionlenb/.style={column name=$\mathbf{Ag^f_2}$},
        columns/actionlenc/.style={column name=$\mathbf{Ag^f_3}$, column type={|c|} },
        columns/naivelenOG/.style={column name=$\mathbf{N_{ind}}$,column type={|c|}},
        columns/actionlenatail/.style={column name=$\mathbf{Ag^c_1}$},
        columns/actionlenbtail/.style={column name=$\mathbf{Ag^c_2}$},
        columns/actionlenctail/.style={column name=$\mathbf{Ag^c_3}$},
        every head row/.style={before row={\hline},after row=\hline\hline},
        every last row/.style={after row={\hline}},
        every nth row={1}{before row=\hline},
        column type/.add={|}{},
        every last column/.style={column type/.add={}{|}},
        ]{TablesNew/table_6q_8t_table.csv}
    \end{minipage}\\[3ex]

    \begin{minipage}{\textwidth}
    \centering
        \subcaption{The lowest mapping gate countfor $Q=7$.}
        \pgfplotstabletypeset[
        col sep=semicolon,
        string type,
        columns={seed, naivelen, naivelenOG, actionlena, actionlenb, actionlenc,  actionlenatail, actionlenbtail, actionlenctail},
        columns/seed/.style={column name=\textbf{T}, column type={|c|} },
        columns/naivelen/.style={column name=$\mathbf{N_{sim}}$},
        columns/actionlena/.style={column name=$\mathbf{Ag^f_1}$},
        columns/actionlenb/.style={column name=$\mathbf{Ag^f_2}$},
        columns/actionlenc/.style={column name=$\mathbf{Ag^f_3}$, column type={|c|} },
        columns/naivelenOG/.style={column name=$\mathbf{N_{ind}}$,column type={|c|}},
        columns/actionlenatail/.style={column name=$\mathbf{Ag^c_1}$},
        columns/actionlenbtail/.style={column name=$\mathbf{Ag^c_2}$},
        columns/actionlenctail/.style={column name=$\mathbf{Ag^c_3}$},
        every head row/.style={before row={\hline},after row=\hline\hline},
        every last row/.style={after row={\hline}},
        every nth row={1}{before row=\hline},
        column type/.add={|}{},
        every last column/.style={column type/.add={}{|}},
        ]{TablesNew/table_7q_8t_table.csv}
        
    \end{minipage}

\end{table}

\subsection{Comparison}\label{sec:comparison}
In this section, we compare the performance of the DDQN RL agents with two standard methods. First, we will use the widely used optimization algorithm, simulated annealing (SA). The implementation details and more results can be found in Appendix~\ref{app:resultssa}. Second, we will compare to the planning method Monte Carlo Tree Search (MCTS), which is defined in detail in the Appendix~\ref{app:resultsmcts}. To be able to compare the methods a fair amount of resources should be considered for each. Here, we compare the number of evaluations performed for each method. We define the total number of RL evaluations $N_{RL}$ as the number of queries to the reward function, which is given by the number of steps per episode summed over all episodes. For the total number of evaluations in the MCTS algorithm, as described in detail in Appendix \ref{app:resultsmcts}, we sum over the maximal tree depth reached in each episode and add the length of the naive solution used to evaluate the solution. In SA, we define the total number of evaluations $N_{SA}$ as the total number of queries to its cost function per repetition summed over all repetitions, as discussed in detail in Sec.~\ref{app:resultssa}. We chose the total number of RL evaluations to be smaller than the total number of MCTS or SA evaluations, where the total number of RL evaluations is $N_{RL}\approx 3\cdot 10^5$, the total number of MCTS evaluations is $N_{MCTS}\approx 4\cdot 10^5$ and the total number of SA evaluations is $N_{SA}\approx 5\cdot 10^5$. 
For each of the methods, we take the shortest found solutions for the 4-qubit GSC instances found in a coarse-grained parameter sweep. The shortest found solutions for the 4-qubit gate sets are shown in Table \ref{tab:comparison_4q}. The MCTS approach yields better results than SA, while the DDQN agent outperforms both methods in all experiments.


\begin{table}[htbp] 
    \caption{The results for reinforcement learning (RL), simulated annealing (SA), and Monte Carlo tree search (MCTS) tested on different sets of target operators (indexed by the column $T$) on $4$ qubits with target set sizes $|T|=8$. The results for the RL are denoted as $RL^f$ and $RL^c$ without and with full cancellations respectively. The results for the SA are denoted as $SA^f$ and $SA^c$ without and with full cancellations respectively. The results for the MCTS are denoted as $MCTS^f$ and $MCTS^c$ without and with full cancellations respectively. All values are denoted in percent of the naive individual solution $N_{ind}$. For comparison, the number of mapping gates in the naive simultaneous ($N_{sim}$) and individual ($N_{ind}$) solutions are provided, which are averaged over 100 random orderings of the target gate set. Note that for the naive individual solution, the order of the operators in the target set is irrelevant.}\label{tab:comparison_4q}
    \begin{minipage}{\textwidth}
    \centering
         \pgfplotstabletypeset[
        col sep=semicolon,
        string type,
        columns={seed, naivelen, naivelenred, rl, mcts, sa, rlred, mctsred, sared},
        columns/seed/.style={column name=\textbf{T},  column type={|c|} },
        columns/naivelen/.style={column name=$\mathbf{N_{sim}}$},
        columns/rl/.style={column name=$\mathbf{RL^f}$},
        columns/mcts/.style={column name=$\mathbf{MCTS^f}$},
        columns/sa/.style={column name=$\mathbf{SA^f}$, column type={|c|} },
        columns/naivelenred/.style={column name=$\mathbf{N_{ind}}$},
        columns/rlred/.style={column name=$\mathbf{RL^c}$},
        columns/mctsred/.style={column name=$\mathbf{MCTS^c}$},
        columns/sared/.style={column name=$\mathbf{SA^c}$},
        %
        %
        every head row/.style={before row={\hline},after row=\hline\hline},
        every last row/.style={after row={\hline}},
        every nth row={1}{before row=\hline},
        column type/.add={|}{},
        every last column/.style={column type/.add={}{|}},
        ]{TablesNew/table_compare.csv}
    \end{minipage}\\[3ex]

\end{table}

\subsection{Generalization for GSC problem}\label{sec:converter}

In this section, we analyze the generalization capabilities of the DDQN for the GSC problem. 
In all previous experiments, only a single starting state is used for training. However, neural network-based RL methods have the capability to learn and generalize over the entire state space, which allows to encode the solutions for an entire family of GSC instances in a single policy. 

Here, we compare the performance of a single agent trained on a number of different starting states. During training, at the beginning of each episode, a starting state is chosen uniformly at random for a set of starting states $S_0$. The size of the starting state set for training the agent is chosen from $|S_0| \in \{1,50,100,1000\}$. All target sets $T$ in the starting state set $S_0$ have the same size $|T|=8$. In Table \ref{tab:converter_compare}, we compare the average mapping gate count for 50 agents averaged over the last 1000 training episodes to the naive individual mapping gate count averaged over all states in the corresponding set $S_0$. The error is given by the standard deviation. These results show that the learned average solution length $\mu(A_g)$ for all $|S_0|=\{1,50,100,1000\}$ is well below the average naive individual solution length $\mu(N_{ind})$. Further, we want to shed light on how the single agent trained on 1000 states performs after training on each state individually. In Table~\ref{tab:converter_Z_50}, we show on how many of the states the single agent trained on 1000 states performs well, relative to the average performance of agents trained on those states separately. Since training all 1000 agents from scratch is rather resource-intensive, we choose a subset of 50 agents to train on 50 different starting states instead. The results show that, after training, on roughly half of the states the agent trained on 1000 states performs similarly or equal to the 50 agents tasked to learn only a single state. To further analyze the single agent trained on 1000 starting states, we compare its obtained solution to the naive individual solution length of all 1000 states. Table~\ref{tab:converter_Z_1000} shows for how many of the 1000 states the performance of a single agent trained on 1000 states is below a given percentage of the individual solution length. We can see from the average mapping gate count achieved during training of the 50 agents trained on states separately that the improvements over the naive individual solution generally varies from $45\%-65\%$. If we compare this to the single agent trained on 1000 states, we can see that such an agent reaches an improvement over the naive individual solution by at least $65\%$ on $Z_{1000}=479\pm14$ states. Training 1000 agents separately would amount to a cost of $15*10^{6}$ evaluations in terms of the number of episodes. A single agent trained trained on a thousand states reduces the evaluation cost by a factor of a thousand. Taking into consideration the performance after training, by training a single agent on 1000 states, we can achieve a reduction of the number of evaluations by a factor of 500 over the agents trained separately on all states.
The corresponding training performance in terms of the mapping gate count, for each episode $e$, is shown in Figure \ref{fig:converter}. In this figure, we can see that even though the complexity of the task increases due to the increasing number of states in the starting state set $|S_0|=1$ to $|S_0|=1000$, the performance of the RL agent during training is almost identical, showing that a DDQN agent can indeed generalize over the state space, learning to solve an entire family of GSC problems $\{(T,N,M)\}_{T \in S_0,|T|<K}$, where $K$ is some integer.  

\begin{figure}[t!]
\centering
\includegraphics[width=.7\linewidth]{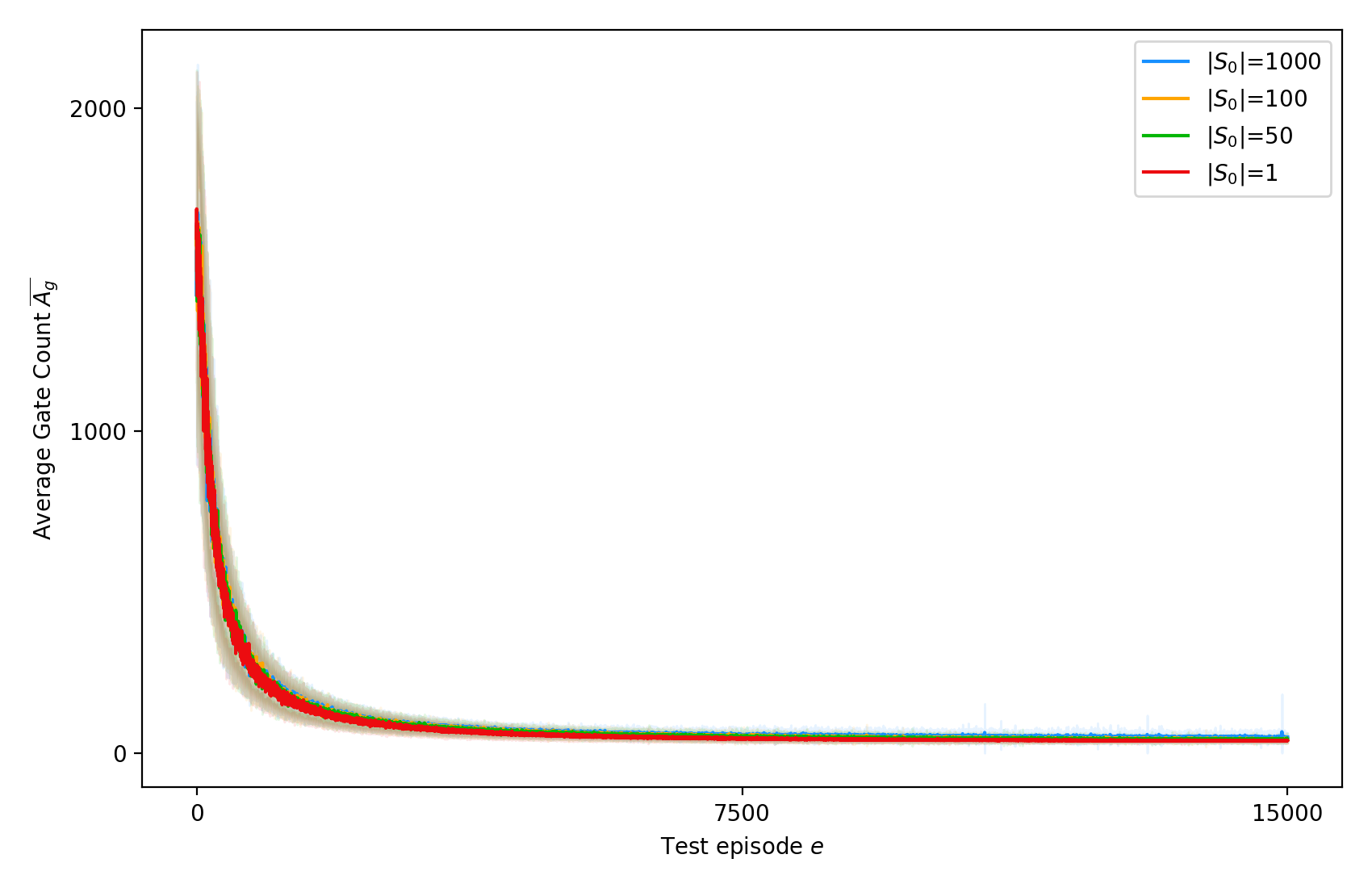}
\caption{The agent's performance in terms of the average mapping gate count $A_g$ for 40 agents during training at each episode on a 4-qubit and target set size $|T|=8$ problem instance with a varying number of starting states $|S_0|=\{1,50,100,1000\}$. The error, indicated by the shaded area, is given by the standard deviation with a cutoff at the maximum gate count of 2000 and the lowest gate count obtained in all runs. The agent's performance is similar even though the number of starting states differs. Thus, these curves indicate that training a single agent on many states is advantageous compared to training many agents separately.}
\label{fig:converter}
\end{figure}

\begin{table}[t!]
\begin{center}
\caption{The solution length of 50 agents averaged over the last 1000 episodes out of 15000 episodes is denoted as $\mu(A_g)$. The corresponding average naive individual solution length $\mu(N_\text{ind})$ is averaged over all elements in $S_0$. These results show that the learned solution length is below the naive individual solution length for all starting state set sizes $|S_0|$.}
\label{tab:converter_Z_50}
\pgfplotstabletypeset[
col sep=semicolon,
columns={numstates, len,  naivelen},
columns/numstates/.style={string type, column name=$\mathbf{|S_{0}|}$},
columns/len/.style={string type, column name=$\mu(A_g)$, column type={|c|} },
columns/naivelen/.style={string type, column name=$\mu(N_\text{ind})$},
every head row/.style={before row={\hline},after row=\hline\hline},
every last row/.style={after row={\hline}},
every nth row={1}{before row=\hline},
column type/.add={|}{},
every last column/.style={column type/.add={}{|}},
]{TablesNew/converter_table_qubits=4_size=8.csv} 
\end{center}

\end{table}

\begin{table}[t!]
\begin{center}
\caption{The average number of states $Z_{50}$ out of 50 states on which an agent trained on 1000 states for 15000 episodes achieves a given solution length relative to the average solution length of agents trained on the states separately. The relative solution length $A_{|S_0|=1000}/A_g$ is denoted as a percentage interval. The average of $Z_{50}$ is taken over 25 agents trained on $S_0$, while $\mu(A_g)$ is the solution length of the separately trained agents averaged over the last 1000 episodes of 15000 episodes of training on a single state. These results show a single agent trained on 1000 states outperforms agents trained separately on each state on $8\pm2$ of the 50 states and roughly matches the performance on around $20\pm3$ of the 50 states.}
\label{tab:converter_Z_1000}
\pgfplotstabletypeset[
col sep=semicolon,
string type,
columns={percent, numstates},
columns/numstates/.style={string type, column name=$\mathbf{Z_{50}}$},
columns/percent/.style={string type, column name=$A_{|S_0|=1000}/\mu(A_g)$, column type={|c|} },
every head row/.style={before row={\hline},after row=\hline\hline},
every last row/.style={after row={\hline}},
every nth row={1}{before row=\hline},
column type/.add={|}{},
every last column/.style={column type/.add={}{|}},
]{TablesNew/converter_table_50_states_compare.csv} 
\end{center}
\end{table}

\begin{table}[t!]
\begin{center}
\caption{The average number of states $Z_{1000}$ out of $|S_0|=1000$ states on which an agent, trained for 15000 episodes, achieves a given solution length relative to the naive individual solution. The average is taken over 25 agents trained on $S_0$. The relative solution lengths $A_{|S_0|=1000}/N_{ind}$ are denoted as percentage intervals. These results show that single agents trained on 1000 states achieve on around $50\%$ of the states a solution length which is reduced by at least $65\%$ compared to the naive individual solution. This indicates a reduction of the number of evaluations by a factor of 500 over the agents trained separately on all states.}
\label{tab:converter_compare}
\pgfplotstabletypeset[
col sep=semicolon,
string type,
columns={percent, numstates},
columns/numstates/.style={string type, column name=$\mathbf{Z_{1000}}$},
columns/percent/.style={string type, column name=$A_{|S_0|=1000}/N_{ind}$, column type={|c|} },
every head row/.style={before row={\hline},after row=\hline\hline},
every last row/.style={after row={\hline}},
every nth row={1}{before row=\hline},
column type/.add={|}{},
every last column/.style={column type/.add={}{|}},
]{TablesNew/converter_table_all_naive_compare.csv} 
\end{center}

\end{table}

\newpage
\section{Discussion}
In this work, we discuss a common gate synthesis problem that arises due to hardware restrictions, where we are tasked to implement a product of target gates using products of available native and mapping gates. In order to benchmark our results with widely used techniques, we focused on a particular example of target, mapping, and native gates. After introducing the problem and discussing two ways of formulating a solution, we showed how even a relatively simple example of this problem is $\mathsf{NP}$-hard. We apply RL to tackle this problem and compare results with a standard approach of mapping sets of target gates to native gates used in previous literature. 

Our results show that RL not only surpasses the naive mapping strategy but also the two other methods tested, as in every instance tested, the RL agents were able to find circuits shorter than the ones returned by the naive, SA, or MCTS strategies (in some cases, significantly shorter). Whilst our results on 4-7 qubits hint at the efficacy of RL for this problem, our approach can be extended to larger systems using methods that increase sample efficiency. Furthermore, we believe that our work opens up a number of interesting avenues to pursue in the near future. Future work might investigate transfer learning applied to this gate synthesis problem, wherein one might consider training the agent on a smaller system size, and then use the knowledge gained on larger problems (more qubits, target gates, etc.), with the aim of speeding up the learning process. This latter approach would of course be of the most substantial value as scaling is always the number one problem, and we expect it to be hard but possible. Another research avenue one might follow is extending this work to Variational Quantum Eigensolver ansatzes and other practical applications, as we focused more on an abstract application of the methods discussed. 

\section*{Acknowledgements}
LMT acknowledges the support by the Austrian Science Fund (FWF) through the DK-ALM: W1259-N27 and SFB BeyondC F7102.
VD and ES acknowledge the support of SURF through the QC4QC project.
This work was supported by the Dutch Research Council (NWO/OCW), as part of the Quantum Software Consortium programme (project number 024.003.037). This work was also supported by the Dutch National Growth Fund (NGF), as part of the Quantum Delta NL programme.

\printbibliography 

\newpage

\appendix
\section{Computational Complexity}\label{app:complexity}
In this section, we prove Theorem~\ref{theorem1}, thereby showing that the gate set conversion problem (GSC) is $\mathsf{NP}$-hard. To this end, in Sec.~\ref{sec:complexity}, we define a simpler decision problem variant called gate set conversion decision problem (GSCD). Additionally, we define a variant of the Hamiltonian path problem called Hamiltonian path with a starting vertex (HPS). Second, we show that there exists a polynomial time reduction from the Hamiltonian path problem (HP) to HPS. Then, we prove that there exists a polynomial time reduction from the HPS to GSCD, proofing that GSCD is indeed $\mathsf{NP}$-hard.  

In the following, we will show that there exists a reduction from HP to HPS showing that HPS is an $\mathsf{NP}$-hard problem. Hence, we define the Algorithm~\ref{algo:HP2HPS} called HP2HPS that maps instances of HP to instances of HPS.

\begin{algorithm}[htbp]
\caption{HP2HPS - Algorithm}
\textbf{Input} A graph $G=(V,E)$ and a starting vertex $s$ are given. \\
\textbf{Output} A graph $G'=(V',E')$ is obtained.\\
\textbf{Procedure}
\begin{enumerate}
\item The graph graph $G'=(V',E')$ is initialized. The vertex set of $G'$ contains all vertices in $V$, as well as, the starting vertex $s$ such that $V':=V \cup \{s \}$. 

\item For each vertex $v_i \neq s$ in $V$, the edge $(v_i,s)$ is added between the vertex $v_i$ and the starting node $s$. Then the corresponding edge is added to the edge set $E':=E' \cup \{(v_i,s)\}$

\end{enumerate}
\label{algo:HP2HPS}
\end{algorithm}


\begin{theorem}
HPS problem is $\mathsf{NP}$-hard. 
\end{theorem}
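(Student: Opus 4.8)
The plan is to show that Algorithm~\ref{algo:HP2HPS} (\textsf{HP2HPS}) is a correct polynomial-time many-one reduction from \textsf{HP} to \textsf{HPS}, which immediately yields $\mathsf{NP}$-hardness of \textsf{HPS} since \textsf{HP} is $\mathsf{NP}$-hard. First I would confirm the reduction is polynomial: the construction introduces a single fresh vertex $s\notin V$, retains all original edges, and joins $s$ to every vertex of $V$, so that $V'=V\cup\{s\}$ and $E'=E\cup\{(v_i,s):v_i\in V\}$ with $|V'|=|V|+1$ and $|E'|=|E|+|V|$. The entire map is computable in $O(|V|+|E|)$ time.

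The core of the argument is the correctness equivalence: $G$ admits a Hamiltonian path if and only if $G'$ admits a Hamiltonian path that starts at $s$. For the forward direction I would take any Hamiltonian path $v_{\pi(1)},\dots,v_{\pi(|V|)}$ of $G$; since $s$ is adjacent to every original vertex, prepending $s$ yields the sequence $s,v_{\pi(1)},\dots,v_{\pi(|V|)}$, which visits each vertex of $V'$ exactly once and uses only edges of $E'$, hence is a Hamiltonian path of $G'$ starting at $s$.

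For the reverse direction --- which I expect to be the only place needing care --- I would start from a Hamiltonian path of $G'$ beginning at $s$, say $s,w_1,\dots,w_{|V|}$. Because a path repeats no vertex, $s$ occurs only in the first position, so $w_1,\dots,w_{|V|}$ is a permutation of $V$. The key observation is that every consecutive pair $(w_i,w_{i+1})$ avoids $s$, while the only edges added by the reduction are those incident to $s$; therefore each such edge already belongs to $E$, and $w_1,\dots,w_{|V|}$ is a Hamiltonian path of $G$. The subtle point to highlight is that forcing $s$ to be an endpoint, via the ``starting vertex'' constraint of \textsf{HPS}, is exactly what guarantees the deleted vertex cannot split the remaining sequence or force the use of a spurious spoke edge; without this constraint $s$ could sit in the interior and the deletion argument would fail.

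Combining the two directions establishes the equivalence, and together with the polynomial-time bound this completes the reduction $\textsf{HP}\le_p\textsf{HPS}$, proving that \textsf{HPS} is $\mathsf{NP}$-hard.
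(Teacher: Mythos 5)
Your proposal is correct and matches the paper's proof essentially step for step: the same reduction \textsf{HP2HPS} (adjoin a fresh vertex $s$ adjacent to every vertex of $V$), the same forward direction by prepending $s$ to a Hamiltonian path of $G$, and the same reverse direction by stripping $s$ and observing that the remaining consecutive edges avoid $s$ and hence lie in $E$. If anything, your treatment of the reverse direction is stated more carefully than the paper's, but the argument is the same.
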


\begin{proof}
This theorem is proven by showing that the HP2HPS algorithm is a polynomial time (Karp) reduction from the HP to the HPS. The HP2HPS algorithm is a polynomial time reduction if the following three properties hold: 
\begin{enumerate}
    \item Given an instance $I_\text{HP}$ of HP the algorithm HP2HPS produces an instance $I_\text{HPS}$. 
    \item The algorithm HP2HPS runs in polynomial time with respect to $|I_\text{HP}|$. 
    \item $I_\text{HP}$ is a YES Instance of HP iff $I_\text{HPS}$ is a YES instance of HPS.  
\end{enumerate}
\end{proof}
\begin{claim}
The algorithm HP2HPS given an instance of HP produces an instance of HPS.
\end{claim}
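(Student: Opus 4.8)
The plan is to verify that the object returned by HP2HPS meets every structural requirement in the definition of an HPS instance, i.e., that it is a well-formed pair consisting of an unweighted undirected graph together with a designated starting vertex. Since this claim asks only for well-formedness of the output (property 1 in the theorem), and not for the correctness of the reduction (property 3), no graph-theoretic argument about Hamiltonian paths is needed here; the proof is a type-check carried out by tracing the two steps of the algorithm.

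First I would read off the output of the algorithm explicitly. After Step 1 the vertex set is $V' = V \cup \{s\}$, and after Step 2 the edge set is $E' = E \cup \{(v_i, s) : v_i \in V,\ v_i \neq s\}$, where $E'$ is understood to be initialized to $E$ before the loop. I would then check the three defining conditions of an HPS instance in turn: (i) $V'$ is a finite vertex set, which holds because $V$ is finite and exactly one vertex $s$ is adjoined; (ii) $E'$ is a valid set of undirected edges on $V'$, which holds because each original edge of $E$ already has both endpoints in $V \subseteq V'$, and each newly added edge $(v_i, s)$ has endpoints $v_i \in V \subseteq V'$ and $s \in V'$; and (iii) the distinguished vertex $s$ belongs to $V'$ by construction, so it is a legitimate choice of starting vertex. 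Together these show that the pair $(G', s)$ with $G' = (V', E')$ has exactly the form required of an HPS instance.

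The main point requiring a sentence of care — rather than a genuine obstacle — is the status of the vertex $s$: I would note that $s$ is taken to be a fresh vertex not already in $V$, so that the construction genuinely enlarges the graph and the edges $(v_i, s)$ are well-defined and distinct from the edges of $E$. Even if one does not insist on freshness, the output still satisfies the formal definition of an HPS instance, so well-formedness is unaffected; the assumption $s \notin V$ is only what makes the later correctness statement (the YES-instance equivalence) go through, and I would defer that discussion to the separate claim handling property 3. I therefore expect this claim to follow immediately once the output of the algorithm is written out, with the only subtlety being to keep clear the distinction between producing a syntactically valid instance (this claim) and producing an equivalent one (a subsequent claim).
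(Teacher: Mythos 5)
Your proof is correct and follows essentially the same route as the paper, which likewise simply observes that the algorithm adjoins one new vertex $s$ and $|V|$ undirected edges to $G$, so the output $G'$ together with $s$ is by construction a well-formed HPS instance. Your version is more explicit (writing out $V'$ and $E'$, noting the freshness of $s$ and the initialization $E' := E$), but it is the same type-checking argument, correctly deferring the YES-instance equivalence to the later claim.
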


\begin{proof}
In Algorithm~\ref{algo:HP2HPS}, a single vertex and $|V|$ undirected edges are added to the graph $G$ to generate a graph $G'$. The added vertex corresponds to the starting state. Thus, $G'$ describes an instance of the HPS. 
\end{proof}

\begin{claim}
 The algorithm HP2HPS runs in polynomial time with respect to $|I_\text{HP}|$.
\end{claim}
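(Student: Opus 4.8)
The plan is to establish this by a direct operation-count over the two steps of Algorithm~\ref{algo:HP2HPS}, since the reduction itself is syntactically trivial and performs no nontrivial computation. First I would fix the convention for the input size: an instance $I_\text{HP}$ consists of the graph $G=(V,E)$ together with the designated vertex $s$, so under any reasonable encoding (e.g.\ an adjacency-list or edge-list representation) we have $|I_\text{HP}| = \Theta(|V|+|E|)$. The goal is then to bound the number of elementary operations performed by HP2HPS by a polynomial in $|V|+|E|$.

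Next I would walk through the two steps and charge each a cost. Step~1 constructs $V' := V \cup \{s\}$ by copying the $|V|$ existing vertices and appending a single new vertex, which is $O(|V|)$ work. Step~2 is a single loop ranging over the vertices $v_i \neq s$ of $V$; it therefore executes at most $|V|$ iterations, and in each iteration it creates one edge $(v_i,s)$ and inserts it into $E'$, which is a constant-time operation under an adjacency-list (or edge-list) representation. Hence Step~2 contributes a further $O(|V|)$ operations.

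Summing the two contributions gives a total running time of $O(|V|)$, which is linear in $|V|$ and thus certainly polynomial in $|I_\text{HP}| = \Theta(|V|+|E|)$; this establishes the claim. The honest remark is that there is no genuine obstacle here --- the only point requiring a little care is the bookkeeping at the boundary between the two steps, namely that the $|V|$ edges added in Step~2 do not affect the asymptotic bound already incurred in Step~1, and that edge insertion is amortized constant time under the assumed data structure. Once the encoding convention is pinned down, the argument is a routine counting argument and the conclusion is immediate.
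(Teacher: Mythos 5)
Your proof is correct and follows essentially the same route as the paper's: both arguments simply count the work of the reduction --- one new vertex and one new edge per vertex of $G$ --- to conclude an $O(|V|)$ running time, polynomial in the instance size. Your version is slightly more careful about the encoding convention (taking $|I_\text{HP}| = \Theta(|V|+|E|)$ rather than the paper's identification of instance size with $|V|$ alone), but this only strengthens the same counting argument.
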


\begin{proof}
The size of an instance $I_\text{HP}=G=(V,E)$ is given by the number of vertices $|V|$. For each vertex in $G$ a single edge is added to the graph, thus, only a single query per vertex is needed, yielding a polynomial time complexity $O(|V|)$. 
\end{proof}

\begin{claim}
$I_\text{HP}$ is a YES Instance of HP iff $I_\text{HPS}$ is a YES instance of HPS.
\end{claim}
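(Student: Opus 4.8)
The plan is to establish the biconditional by converting a Hamiltonian path on one side directly into a Hamiltonian path on the other, exploiting the single structural fact that the HP2HPS construction guarantees: in $G'=(V',E')$ the fresh vertex $s$ is adjacent to \emph{every} vertex of $V$, and the only edges of $E'$ not already in $E$ are precisely these edges incident to $s$. Formally, $V'=V\cup\{s\}$ with $s\notin V$, and $E'=E\cup\{(v_i,s):v_i\in V\}$. I would state this decomposition of $E'$ at the outset, since both directions rely on it.

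For the forward direction ($\Rightarrow$), suppose $I_\text{HP}=G$ is a YES instance, so $G$ admits a Hamiltonian path $v_{i_1},v_{i_2},\dots,v_{i_n}$ with $n=|V|$ visiting each vertex of $V$ exactly once. The plan is to prepend $s$: because $s$ is adjacent to $v_{i_1}$ in $G'$ (it is adjacent to all original vertices), the sequence $s,v_{i_1},v_{i_2},\dots,v_{i_n}$ is a walk in $G'$ whose consecutive pairs are all edges of $E'$ (the first pair lies in the added edges, the remaining pairs lie in $E\subseteq E'$). It visits $s$ once and each original vertex once, hence every vertex of $V'$ exactly once, and it begins at $s$; thus it is a Hamiltonian path of $G'$ starting at $s$, certifying $I_\text{HPS}$ as a YES instance.

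For the converse ($\Leftarrow$), suppose $G'$ has a Hamiltonian path starting at $s$, say $s,w_1,w_2,\dots,w_n$ with $\{w_1,\dots,w_n\}=V$. The plan is to delete the leading $s$ and argue that $w_1,\dots,w_n$ is already a Hamiltonian path of $G$. It clearly visits every vertex of $V$ exactly once, so the only thing to verify is that each consecutive edge $(w_j,w_{j+1})$ belongs to $E$ rather than merely to $E'$. This is where the one genuine obstacle lies, and it is resolved by the Hamiltonicity/uniqueness of $s$: since a Hamiltonian path visits $s$ exactly once and here $s$ occupies the first position, none of the vertices $w_1,\dots,w_n$ equals $s$, so no edge $(w_j,w_{j+1})$ is incident to $s$. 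By the decomposition $E'=E\cup\{(v_i,s)\}$, every edge not incident to $s$ lies in $E$, so all the $(w_j,w_{j+1})$ are edges of $G$. Hence $w_1,\dots,w_n$ is a Hamiltonian path of $G$, making $I_\text{HP}$ a YES instance. Combining the two implications gives the claim; I expect the argument to be routine once the edge decomposition and the ``$s$ appears only as the start'' observation are made explicit.
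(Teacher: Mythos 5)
Your proof is correct and takes essentially the same route as the paper's: prepend $s$ to a Hamiltonian path of $G$ for the forward direction, and strip the leading $s$ for the converse. If anything, your converse is slightly more careful than the paper's, since you explicitly invoke the decomposition $E'=E\cup\{(v_i,s):v_i\in V\}$ to justify that the residual edges lie in $E$ rather than merely in $E'$, a point the paper glosses over by only remarking that $G$ is a subgraph of $G'$.
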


\begin{proof}
($\Rightarrow$) Suppose HP instance $G$ has a Hamiltonian path with the vertex sequence $(v_{i_{0}},...,v_{i_{k}})$. Then the vertex sequence $(s,v_{i_{0}},...,v_{i_{k}})$ is a witness for the Hamiltonian path in $G'$ starting in $s$. Since the vertex sequence $(v_{i_{0}},...,v_{i_{k}})$ corresponds to a Hamiltonian path on $G$, we know that all vertices are distinct and adjacent. Thus, $G'$ has a Hamiltonian path starting in vertex $s$.

($\Leftarrow$) Suppose the HPS instance $G'$ has a Hamiltonian path with a vertex sequence $(s,v_0,...,v_k)$. $G=(V,E)$ is a subgraph of $G'=(V',E')$, with $V=V'\setminus \{s\}$. We know that the vertex sequence $(v_0,...,v_k)$ only contains distinct and adjacent vertices, since the vertex sequence $(s,v_0,...,v_k)$ corresponds to a Hamiltonian path in $G$. Thus, $G$ has a Hamiltonian path with the vertex sequence $(v_0,...,v_k)$. 

\end{proof}

In the next step, we introduce Algorithm~\ref{algo:HPS2GSCD} called HPS2GSCD which given an an instance of HPS generates an instance of GSCD.  

\begin{algorithm}[htbp]
\caption{HPS2GSCD - Algorithm}
\textbf{Input} A graph $G=(V,E)$ and an instance of GSCD $(T,N,M,K)$ are given. \\
\textbf{Output} A labelled graph $G=(V,E)$ is obtained.\\
\textbf{Procedure}\\
\begin{enumerate}

\item A vertex label set $\mathcal{Z}=\{ 0, 1 \}^{Q}$ is initialized, where each zero represents a Pauli $Z$ gate and each one represents a Pauli $X$ gate, such that an element $z_{t_i}$ corresponds to the binary encoding of the gate $t_i$. Similarly, an edge label set $\mathcal{Y}=\{ 0, 1 \}^{Q}$ is defined, where a 0-bit represents identity and 1-bit represents a Hadamard gate $H$, such that an element $y_{m_{li}}$ corresponds to the binary encoding of the gate $m_{li}$. The gate $m_{il}$ maps the target gate $t_i$ to the target gate $t_j$ with $t_l=m_{li}^{\dagger} t_i m_{li}$. The mapping gate $m_{li}$, can be determined through its label $y_{m_{li}}$ by calculating the XOR of label $z_{t_{l}}$ and $z_{t_{i}}$. Additionally, a set $T=\emptyset$ and a set $M=\emptyset$ are initialized. The native set is defined as $S = \{ Z^{\otimes Q} \}$. 

\item To the starting vertex $s$, the label $z_0$ is assigned. This all-zero binary string of length $Q$ represents the gate $Z^{\otimes Q}$.

\item For each vertex $v_i$ in $V\setminus \{s\}$:

\begin{enumerate}

\item  The vertex $v_i$ is assigned the element $z_{t_i} \in \mathcal{Z}$ corresponding to the lowest binary number present in the set $\mathcal{Z}$. Then, this label is removed from $\mathcal{Z}$, i.e.  $\mathcal{Z}$ is replaced by $\mathcal{Z} \setminus \{ z_{t_i} \}$. The element $t_i$ corresponding to the label $z_{t_i}$ added to the target set $T:=T\cup \{t_i\}$.

\item For all neighboring vertices labelled by $z_{t_j}, z_{t_k}$, the edge label $y_{m_{jk}}$ is added to the connecting edge. This edge corresponds to the gate $m_{jk}$ that transforms $m_{jk}^{\dagger}t_j m_{jk}= t_k$. Add the element $m_{jk}$ to $M$. 

\item For all $z_{t_j}, z_{t_k}$, if for any label $z_{t_{l}} \in \mathcal{Z}$ with $t_l=m_{li}^{\dagger} t_i m_{li}$ the mapping gate is the same as for the gate $t_j$ and $t_k$ with $m_{ij}=m_{jk}$, the label is removed from the vertex label set $\mathcal{Z} \setminus \{ z_{t_l} \}$.

\end{enumerate}
\end{enumerate}
\label{algo:HPS2GSCD}
\end{algorithm}


Removing elements for the vertex label set, as described in step 3.c) of the algorithm, introduces a condition on the target set, which shall be referred to as `no spurious weight-1 edge' condition. This condition is used in the proof below to show that YES instances of GSCD are mapped to YES instances of HPS. 

\begin{theorem}
GSC problem is $\mathsf{NP}$-hard. 
\end{theorem}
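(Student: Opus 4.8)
The plan is to finish the reduction chain $\mathsf{HP}\le_p\mathsf{HPS}\le_p\mathsf{GSCD}$ and then note that $\mathsf{NP}$-hardness of the decision problem GSCD immediately gives $\mathsf{NP}$-hardness of GSC: any procedure that returns a shortest resolving sequence for a GSC instance decides GSCD by comparing the optimal length against the threshold $|T|-1$. Since $\mathsf{HP}\le_p\mathsf{HPS}$ has already been established, the only remaining task is to prove that Algorithm~\ref{algo:HPS2GSCD} (HPS2GSCD) is a polynomial-time Karp reduction, i.e. it (i) outputs a legitimate GSCD tuple $(T,N,M,k)$, (ii) runs in time polynomial in the size of the HPS instance, and (iii) sends YES-instances to YES-instances and NO to NO. Properties (i) and (ii) I would dispatch as routine: the algorithm assigns the $|V'|$ vertices distinct binary labels in $\{0,1\}^Q$ (which requires only $Q=O(\log|V'|)$, leaving the produced instance polynomial in size), labels each edge by the bitwise XOR of its endpoint labels, and performs $O(|V'|+|E'|)$ elementary steps.

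The crux is property (iii), and everything hinges on one structural observation. Encoding each target Pauli string in $\{X,Z\}^{\otimes Q}$ by its binary label ($0\leftrightarrow Z$, $1\leftrightarrow X$) and each mapping gate in $\{I,H\}^{\otimes Q}$ by its binary pattern ($0\leftrightarrow I$, $1\leftrightarrow H$), the conjugation $m^\dagger t m$ acts on labels as bitwise XOR, because a Hadamard exchanges $X\leftrightarrow Z$ on its qubit. Hence, after applying $m_1,\dots,m_i$, a surviving target's label has been shifted by the cumulative pattern $Y_i:=\bigoplus_{j\le i}y_{m_j}$, and that target is resolved (mapped into $N=\{Z^{\otimes Q}\}=\{0^Q\}$) exactly when $Y_i$ equals its original label. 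A resolving run is therefore precisely a walk $Y_0=0^Q,Y_1,\dots$ in the hypercube that sweeps out all target labels, where the admissible moves are exactly the XOR-patterns HPS2GSCD placed in $M$, namely the edge labels of $G'$.

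For the forward direction I would take a Hamiltonian path $s=u_0,u_1,\dots,u_{|V'|-1}$ of $G'$ and use its edge labels $y_j=z_{u_{j-1}}\oplus z_{u_j}$ as the mapping-gate sequence. Telescoping gives $Y_i=z_{u_i}$, so step $i$ resolves exactly the single target $u_i$; the walk thus resolves every target in the prescribed number of steps and certifies a YES-instance of GSCD. For the backward direction I would begin from a resolving sequence of the allotted length and read off the ordered cumulative positions $Y_0,Y_1,\dots$. Because the target labels are distinct and the step budget is tight, each step must resolve exactly one new target, so the positions visit the native label and every target label each exactly once, inducing an ordering of the vertices of $G'$ that starts at $s$. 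Each consecutive move $Y_{i-1}\to Y_i$ is an element of $M$, hence an edge of $G'$, so this ordering is a Hamiltonian path from $s$.

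The step I expect to be the main obstacle is exactly this backward direction, specifically certifying that a valid GSCD solution cannot ``cheat''. Two failure modes must be excluded: resolving two targets in a single step (impossible here, since a common XOR shift preserves distinctness of labels) and, more delicately, taking a move that shortens the walk without corresponding to a genuine edge of $G'$. The dangerous case is a \emph{weight-one} pattern (a single Hadamard) that happens to connect two target labels even though $G'$ has no such edge; the `no spurious weight-1 edge' condition imposed in step 3(c) of Algorithm~\ref{algo:HPS2GSCD} is designed precisely to delete such labels, so that the set of realizable XOR-moves coincides with the edge set of $G'$ and the induced walk is forced to be a legitimate Hamiltonian path. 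Verifying that this condition removes every spurious shortcut while preserving equivalence with HPS is the technical heart of the proof; once it is in place, containment of GSCD in $\mathsf{NP}$ (Corollary~\ref{cor:GSCD}) and the resulting $\mathsf{NP}$-completeness follow immediately.
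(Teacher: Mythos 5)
Your proposal is correct and follows essentially the same route as the paper's proof: the reduction chain $\mathsf{HP}\le_p\mathsf{HPS}\le_p\mathsf{GSCD}$, the $\{0,1\}^Q$ labelling under which conjugation by gates in $\{I,H\}^{\otimes Q}$ acts as bitwise XOR, a forward direction that the paper phrases as induction along the path but is equivalent to your telescoping of cumulative shifts, and a backward direction resting on the tight step budget forcing exactly one resolution per step together with the `no spurious weight-1 edge' condition of step 3(c) ensuring realizable moves coincide with edges of $G'$. Your only genuine additions are cosmetic strengths rather than a different method: you make explicit the decision-to-optimization link (that solving GSC decides GSCD by comparing the optimal length to $|T|-1$), which the paper states only in passing, and you isolate the same spurious-shortcut issue the paper handles with that condition as the technical crux.
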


\begin{proof}
This theorem will be proven by showing that the HPS2GSCD algorithm is a  polynomial time reduction from the HPS problem to the GSCD problem. The  HPS2GSCD algorithm is a polynomial time reduction if the following three properties are fulfilled: 
\begin{enumerate}
    \item Given an instance $I_\text{HPS}$ of HPS the algorithm HPS2GSCD produces an instance $I_\text{GSCD}$. 
    \item The algorithm HPS2GSCD runs in polynomial time with respect to $|I_\text{HP}|$. 
    \item $I_\text{HPS}$ is a YES instance of HPS iff $I_\text{GSCD}$ is a YES instance of GSCD.  
\end{enumerate}

By proving the following three claims, we show the three properties above hold, which in turn shows that a reduction from HP to GSCD exists, also denoted as $\mathrm{HP}\leq_{P}\mathrm{GSC}$. 

\begin{claim}
The algorithm HPS2GSCD given an instance of HP produces an instance of GSCD.
\end{claim}

\begin{proof}
The algorithm HP2GSCD takes a graph $G$ and generates the sets $M,T,S$. The generated tuple $(M,T,S,|T|)$ is an instance of GSCD. 

\end{proof}

\begin{claim}
The algorithm HPS2GSC runs in polynomial time with respect to $|I_\text{HP}|$.
\end{claim}

\begin{proof}
 
The size of an instance $I_\text{HP}=G=(V,E)$ is given by the number of vertices $|V|$. For each vertex in the graph, the algorithm removes at most $\binom{i -1}{2}+1 = \frac{(i-1)(i-2)}{2} +1$ labels from the vertex label set. Thus, for the entire graph at most $\sum^{|V|}_{i=1} \left[ \frac{(i-1)(i-2)}{2} +1 \right]= \frac{1}{6} (|V|^3 - 3 |V|^2 + 8 |V|) \sim O(|V|^3)$ labels are removed from $\mathcal{Z}$, to add $|V|$ elements of $T^u$ to $T$. 
Thus, the number of $\binom{i -1}{2}+1$ queries per vertex $i$, yield an algorithm with a polynomial time complexity $O(\sum^{|V|}_{i=1} \binom{i -1}{2}+1) \sim O(|V|^3)$.
\end{proof} 

From the upper bound of the time complexity, an upper bound on the number of qubits needed to encode a graph with $|V|$ vertices can be determined to be at most $N \in O(\log_2(|V|^3)) = O(3 \log_2(|V|))$.

\begin{claim}
$I_\text{HPS}$ is a YES instance of HPS iff $I_\text{GSCD}$ is a YES instance of GSCD.   
\end{claim}

\begin{proof}

($\Rightarrow$) An instance $I_\text{HP}$ of HP has the answer YES if, the graph $G$ has a Hamiltonian path. The sequence $(v_1,...,v_{|V|})$ defines the vertex ordering of the path. We generate the graph $G$, as described above, and transform $T$ by $m$ satisfying $m^{\dagger} t_{v_1} m = Z^{\otimes N}$, obtaining the set $T' = \{t'_v=m^{\dagger} t_v m: t_v \in T  \}$.
Then, traversing the $k^{\mathrm{th}}$ edge of the path (i.e. $(v_k, v_{k+1})$) corresponds to mapping $t'_{v_{k+1}}$ to $Z^{\otimes N}$. This can be shown inductively as follows:

\begin{enumerate}
\item Base case ($j=2$): Trivially we have $m_1^{\dagger} t'_2 m_1 = t'_1 = Z^{\otimes N}$.

\item Inductive hypothesis: Assume this holds for $j=k>2$, that is \\ $(\prod^{k-1}_{n=1} m_n)^{\dagger}t'_k (\prod^{k-1}_{n=1} m_n) = t'_1 = Z^{\otimes N}$

\item For $j=k+1$ we then have 

\begin{align}
\left( \prod^{k}_{n=1} m_n \right)^{\dagger} t'_{k+1} \left( \prod^{k}_{n=1} m_n \right)= \left(\prod^{k-1}_{n=1} m_n \right)^{\dagger} m_k^{\dagger}  t'_{k+1} m_k \left(\prod^{k-1}_{n=1} m_n \right),
\end{align}
since elements of $M$ commute. By the inductive hypothesis and the fact that $m_k^{\dagger} t'_{k+1} m_k = t_k$, 

\begin{align}
\left(\prod^{k}_{n=1} m_n \right)^{\dagger} t'_{k+1} \left(\prod^{k}_{n=1} m_n \right) = \left(\prod^{k-1}_{n=1} m_n \right)^{\dagger} t'_k \left(\prod^{k-1}_{n=1} m_n \right) = t'_1 = Z^{\otimes N}.
\end{align}
\end{enumerate}

Thus, the Hamiltonian path gives an ordering of the elements of $M$, i.e. $(m_1,...,m_{|T|})$ s.t. $(\prod^{i_j}_{n=1} m_n )^{\dagger} t_{v_j} (\prod^{i_j}_{n=1} m_n )= Z^{\otimes N}$, where in this case $i_j = j$, and therefore $(S,T,M,|T|) \in \mathrm{GSCD}$. Thus, also the $I_\text{GSCD}$ instance yields the answer YES. 

($\Leftarrow$) Suppose the constructed $(S,T,M,|T|)$ corresponding to $G$ is in GSCD. Then, there exists a sequence $(m_1,...,m_{|T-1|})$ that resolves $T$.
We know that a single application of $m$ to the set $T$ cannot map more than one element to $t_0$, as the actions are bijective. Only if two elements in $T$ are the same, they would be removed by the same gate $m$. However, due to the construction of $T$, using the `no spurious weight-1 edge' condition, all elements in $T$ are unique. 
Thus, since the set $T$ is resolved in $|T|-1$ steps, each step must remove exactly one element.
This defines an ordering of elements in $(t_1,...,t_n)$, such that for $j=1, ..., |T|-1$ it holds that $(\prod^1_{n=j} m_n) t_j (\prod^j_{n=1} m_n) = t_0$. Since, the application of $m_j$ resolves $t_j$, we label the corresponding vertex of $t_j$ as $v_j$.
We can show that $t_j$ and $t_{j+1}$ corresponds to neighboring vertices, as 
$$(\prod^{j}_{n=1}  m_n)^\dagger t_j (\prod^j_{n=1} m_n) = t_0$$
$$(\prod^{j+1}_{n=1}  m_n)^\dagger t_{j+1} (\prod^{j+1}_{n=1} m_n) = t_0$$
Which can be rewritten as:
$$ (\prod^{j+1}_{n=1}  m_n)^\dagger t_{j+1} (\prod^{j+1}_{n=1} m_n)=(\prod^{j}_{n=1} m_n)^\dagger t_j (\prod^j_{n=1} m_n)$$
$$m_{j+1}^{\dagger}t_{j+1}m_{j+1}=t_{j}$$
Due to the construction of the set $M$, all its elements $m_j$ only connect neighboring vertices. Thus, the two vertices $v_j$ and $v_{j+1}$ corresponding to $t_{j}$ and $t_{j+1}$ are neighbors. The vertex sequence $(s,...v_{|T|-1})$ consists of neighboring vertices. We can show by contradiction, that each vertex in the sequence is unique. If two of these vertices were the same, then a single step would remove two elements in one step and the set $T$ would not be empty after $|T|-1$ steps. The vertex sequence $(s,v_1,...,v_{|T|-1})$ defines a Hamiltonian path starting in $s$, as each vertex $v_j\in V'$ is uniquely represented and all vertices are neighboring. 
Thus, the instance $I_{HPS}=G'$ of HPS is a YES instance. 
\end{proof}
All three claims were proven, showing that HPS2GSCD is a polynomial time reduction. Given that HP is $\mathsf{NP}$-hard, due to the reduction from HP to GSCD, GSCD and its optimization variant are $\mathsf{NP}$-hard. 
\end{proof}

To proof Corollary~\ref{cor:GSCD} that states that GSCD is $\mathsf{NP}$-complete, apart from proving that GSCD is $\mathsf{NP}$-hard, GSCD needs to be in $\mathsf{NP}$. To show that it is in $\mathsf{NP}$, we define the Algorithm~\ref{algo:GSCD} that takes the mapping gate sequence $(m_1,...,m_K)$, the witness, as an input and outputs YES or NO.  

\begin{algorithm}[htbp]
\caption{GSCD Decision Algorithm}

\textbf{Input} An instance of GSCD $(T,N,M)$ (where $T$ and $N$ are w.l.o.g disjoint) is given. A step counter is set to $k=1$, a counter for removed elements is set to $l=1$ and the transformed target set is initialized to $T^{(k)}:= T$. A mapping gate sequence $(m_1,...,m_K)$ is given as a witness. \\
\textbf{Output} YES or NO. 

\textbf{Procedure}
For $k=1,\dots ,K$: 
    \begin{enumerate}
                \item The mapping gate $m_k$ applied to the target set $T^{(k)}$, such that $T^{(k)}=\{ m_{k}^\dagger t^{(k)} m_{k}|\forall t^{(k)} \in T^{(k)}\}$ and $k$ is incremented by one.
                \item For every element in $T$, if $t_j^{(k)}$ is equal to an element in $N$, this element is removed from the transformed target set $T^{(k)}:= T \setminus \{t_j^{(k)}\}$. The gate $t_j^{(k)}$ is the $l$-th removed element form the target set $T$, s.t. $t_j^{(k)}=t_{j_{l}}$. If an element was removed $l$ is incremented by one.

    \end{enumerate}
If $T^{(K)}=\emptyset$, the output is YES, otherwise the output is NO. 

\label{algo:GSCD}
\end{algorithm}



\section{Further results and implementation details}\label{app:results}
In this section, we discuss further important details on the implementation, like hyperparameter settings for each of the three chosen methods RL, SA, and MCTS. For the exact implementation, we refer to {\href{https://github.com/LeaMarion/RL-for-compilation-of-product-formula-hamiltonian-simulation.git}{our GitHub repository}}. 


\subsection{Reinforcement Learning}\label{app:resultsrl}
In order to make use of DDQNs, we have to convert the target gate set to a state that can be used as input for the networks. Suppose we have a set of Pauli operators on $Q$ qubits $\{ P_1, P_2, ..., P_m \} \subset \mathbb{P}^Q$. We could then define the state corresponding to $T$ as 

\begin{align}
s = &(\phi(P^{(1)}_1), ..., \phi(P^{(N)}_1), \phi(P^{(1)}_2), ..., \phi(P^{(N)}_2),..., \phi(P^{(1)}_m), ..., \phi(P^{(N)}_m))~,
\end{align}
where $P^{(i)}_j \in \mathbb{P}^1$ is the $i^{\mathrm{th}}$ Pauli operator in the $j^{\mathrm{th}}$ $N-$qubit Pauli operator $P_j$, and $\phi:\mathbb{P}^1 \rightarrow \mathbb{R}^{m}$ is some encoding function. In this work, we chose the function $\phi:\mathbb{P}^1 \rightarrow \{-1,1\}^4$ defined as 

\begin{equation}
  \phi(P)=\left\{
  \begin{array}{@{}ll@{}}
    (1,-1,-1,-1), & \text{if}\ P=I \\
    (-1,1,-1,-1), & \text{if}\ P=X \\
    (-1,-1,1,-1), & \text{if}\ P=Y \\
    (-1,-1,-1,1), & \text{if}\ P=Z ~,
  \end{array}\right.
\end{equation} 
in order to encode the set of Pauli operators.

The action space is defined by the set $M^u$, at each episode step applying one of the single-qubit gates ($H$ or $S$), giving $2 N$ actions, or a two-qubit gate ($CNOT$ or $SWAP$) on any neighboring qubits (since we're assuming the qubits have linear connectivity), giving an additional $2 (Q-1)$ actions, thus resulting in a total of $4Q - 2$ allowed actions.

The hyperparameters in the reward function defined in Eq.~\eqref{eq:reward_dense} are set to $C=-0.00001$ and $D=0.1$ for all experiments. To test the DDQN agents, we picked 5 different sets $T$ with either 4-, 5-, 6- or 7-qubit Pauli operators. The $\epsilon$-greedy approach was used, with the initial $\epsilon_0$ set to 0.9999, set to decay to a minimal value of $0.01$ after $15000$ episodes, and the discount factor $\gamma$ set to 0.75. Furthermore, both target and online networks have 3 hidden layers with 500 neurons each, utilizing a ReLu activation function for the nodes, and the Adam optimizer with learning rate $\alpha$ of $10^{-5}$ was chosen. The number of episodes was chosen to be $15000$. The maximum number of actions the agent can take before the episodes is terminated is set to 1000, which corresponds to a maximal gate count of 2000. These parameters were obtained from a coarse grid search.  For the comparison results presented in Sec.~\ref{sec:comparison}, the maximum number of actions taken was set to $100$, and the number of episodes was set to $5000$. 

For each of the experiments in Sec.~\ref{sec:perform}, the average mapping gate count $\overline{A_g}$ during training was recorded. Here, we complement these results, for Fig.~\ref{fig:training_qubit} that shows the performance for problems with target set size $|T|=8$ and qubit number $Q=\{4,5,6,7\}$, with the average mapping gate count $\overline{A_g}$ during test episodes, as shown in Fig.~\ref{fig:ddqn-test-q}. 
Interestingly, the performance of the agents during training and during the test episodes differs significantly. A small difference can be explained by the $\epsilon$ parameter used for the $\epsilon$-greedy policy during training. However, there is a second factor that contributes to this difference. In this environment, if an agent chooses a mapping gate twice that is its own inverse, if no gate was removed after the first application, the next state is the same as the previous. Then, due to the deterministic nature of the policy during testing and the deterministic transition function of the environment, the agent is stuck and chooses the same action repeatedly until the limit of 2000 gates is reached. In Fig.~\ref{fig:ddqn-test-q}, the performance during tests, which exhibits stronger fluctuations than the performance during training (see Fig~\ref{fig:training_qubit}) is shown for an instance with target set size $|T|=8$ and $Q=7$. In Fig.~\ref{fig:ddqn-fail-compare}, we directly compare the average performance during test and training episodes. To illustrate that the fluctuations are caused by the unsuccessful episodes of the agents, only the average performance of the successful agents (if at least one agent was successful) is also depicted in Fig.~\ref{fig:ddqn-fail-compare}. The likelihood of success for an agent depends on the cutoff and the policy. Since the environment is deterministic and some of the actions are their own inverse, using the deterministic policy, an agent can get stuck in the same non-terminal state until the cutoff is reached. Such a scenario is less likely using the $\epsilon$-greedy policy, where taking random actions can lead to new states, resulting in an overall more stable performance.

\begin{figure}
  \begin{minipage}[t]{0.45\textwidth} 
    \includegraphics[width=\linewidth]{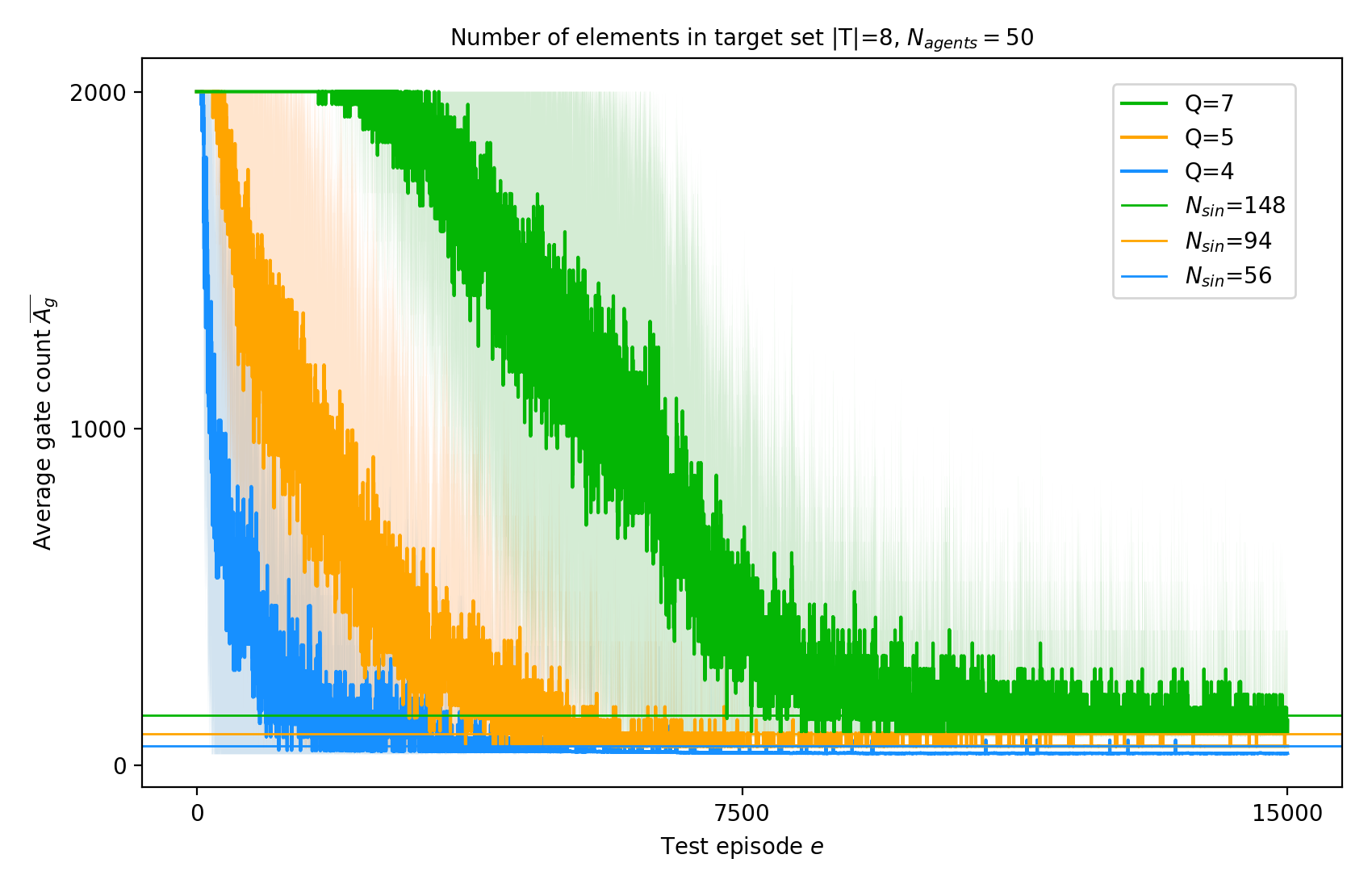}
    \caption{The agent's performance in terms of average gate count $\overline{A_g}$ for 50 agents during test episodes for a single instance of GSC with target set size $|T_0|=8$ and qubit number $Q=4$,$Q=5$,$Q=6$, and $Q=7$. The same color line indicates the gate count of the corresponding naive individual solution $N_\text{ind}$. The agent's performance during training with a fixed target set size of $|T|=8$ and a varying number of qubits.}
    \label{fig:ddqn-test-q}
  \end{minipage}%
  \hfill
    \begin{minipage}[t]{0.45\textwidth}
    \includegraphics[width=\linewidth]{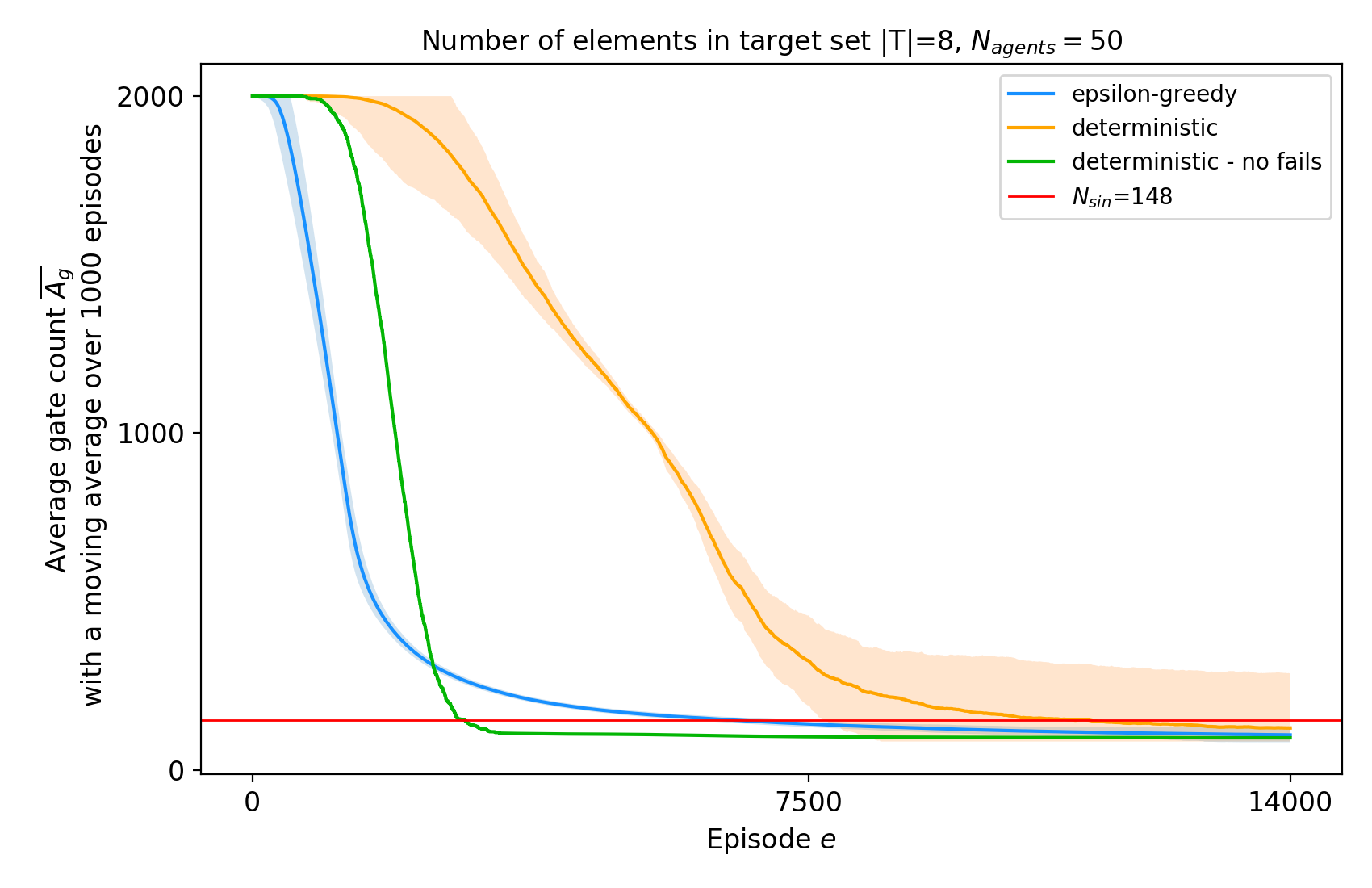}
    \caption{Comparison of the average gate count over 50 agents during training (epsilon-greedy) and during test (deterministic) episodes with a fixed target set size of $|T|=8$ and qubit number $Q=7$ with an additional sliding window average of 1000 episodes. The difference in performance between test and training episodes, cannot be explained by the $\epsilon$ parameter alone. It is caused by agents using an unsuccessful deterministic policy, where it always the same action is used and the state does not change. If we remove the unsuccessful agents from the test episode gate count average when there is at least one agent with a gate count below the cutoff 2000 gates, as shown we obtain a gate count indicated by the green curve (deterministic - no fails). However, a difference between the blue and the green curve remains, which is partially caused by the number of agents and the cutoff of 2000 gates. The larger the number of agents and the larger the cutoff, the lower the likelihood of all of the agents simultaneously failing to solve the task within the required number of steps, resulting in an average performance of the deterministic policy with no failures that resembles more closely to the epsilon-greedy policy. Then the only remaining difference between the performances can be attributed to the $epsilon$ used in the policy for training not for testing.}
    \label{fig:ddqn-fail-compare}
  \end{minipage}
\end{figure}

\subsection{Simulated Annealing}\label{app:resultssa}
In this section, we report further implementation details and results using SA to solve GSC instances. The best results were then used for the comparison with RL and MCTS in Sec.~\ref{sec:comparison}.
To apply SA to the GSC problem, we first convert an initial list of mapping gates to a list of binary strings. In this project, we opted to start from either an `empty' string of gates (i.e. identity on every qubit) or the simultaneous form of the naive solution. At each step, we randomly choose an action from the action string, and flip a randomly chosen bit from the corresponding binary string. We transform the set $T$ using the new gate sequence and calculate the cost of this new set $T'$. If the cost decreases, we accept this new string, whereas if the cost increases, then we only accept it with probability $P(T,T',\tau) = \exp(-(c(T')-c(T))/\tau)$, where $c(\cdot)$ is the cost function, and $\tau$ is the temperature, which is annealed (in this work, linearly) during the optimization process to some final small value (we use $\tau_{min} = 0.01 \tau_0$ as this final value, with $\tau_0$ being the initial value used). For this report, the cost function used is given by

\begin{equation}
\label{cost_nonnorm}
c(T) = | T \setminus T \cap N^u|~.
\end{equation}
%
%
%
While not exactly the same, this cost function is similar to the reward scheme used in the DDQN approach above. Note that, if at any point while applying a mapping gate sequence, an element $t \in T'$ is mapped to some element of $N^u$, no further actions are applied to this Pauli string.
%

\begin{table}[t!]
\begin{center}
\caption{Results for 5 target sets on 4 qubits mapped to native gates using SA cost function $c(T)$ given by Eq.~\eqref{cost_nonnorm}. Here we show results for both the initially empty action string as well as the naive action string, with the latter outperforming the former. The temperature was set to $\tau = 0.25$. All values are denoted in percent of the naive individual solution $N_{ind}$. For comparison, the number of mapping gates in the naive simultaneous ($N_{sim}$) and individual ($N_{ind}$) solutions are provided, which are averaged over 100 random orderings of the target gate set.}\label{tab:sa_4q}
\pgfplotstabletypeset[
col sep=semicolon,
string type,
columns={seed, naivelen, naivelenOG, minsoln, naiveminsoln,  minsolnred, naiveminsolnred},
columns/seed/.style={column name=\textbf{T},  column type={|c|} },
columns/naivelen/.style={column name=$\mathbf{N_{sim}}$},
columns/minsoln/.style={column name=$\mathbf{SA^f_E}$ },
columns/naivelenOG/.style={column name=$\mathbf{N_{ind}}$, column type={|c|}},
columns/minsolnred/.style={column name=$\mathbf{SA^c_E}$},
columns/naiveminsoln/.style={column name=$\mathbf{SA^f_N}$, column type={|c|}},
columns/naiveminsolnred/.style={column name=$\mathbf{SA^c_N}$},
every head row/.style={before row={\hline},after row=\hline\hline},
every last row/.style={after row={\hline}},
every nth row={1}{before row=\hline},
column type/.add={|}{},
every last column/.style={column type/.add={}{|}},
]{TablesNew/table_naive_percent.csv}
\end{center}

\end{table}

We ran experiments for the 5 different sets $T$ of size $|T|=8$ on 4 qubits. To ensure a fair comparison to the other methods, we set a termination threshold $N_{SA}^{t}=5\cdot10^{5}$ for the total number of evaluations $N_{SA}$. For every initial temperature and target gate set, the algorithm is repeated until the total number of evaluations is reached. Such that the total number of evaluations $N_{SA}=\sum_{r=1}^{R_{SA}}N_{cost}(i)$ is given by the number of queries to the cost function $N_{cost}(r)$  in repetition $r$ summed over all repetitions, where $R_{SA}$ is the total number of repetitions. After each repetition, the total number of evaluations is compared to the threshold, and the experiment is stopped if $N_{SA}\geq N_{SA}^{t}$.  

In Table~\ref{tab:sa_4q}, we show the best results obtained using SA, for which the temperature was chosen to be $\tau = 0.25$ after a coarse-grained parameter sweep. The experiments were performed for both an empty initial mapping gate sequence, as well as the naive initial sequence. As can be seen, starting from the naive sequence is beneficial, resulting in significantly shorter sequences.

\subsection{Monte Carlo Tree Search}\label{app:resultsmcts}
In this section, we report further implementation details and results using MCTS to solve GSC instances. The best results were then used for the comparison with RL and SA in Sec.~\ref{sec:comparison}.
To see how MCTS can be applied to the GSC problem, we reformulate the latter as a game tree, where each node represents the target set after some mapping gate from the set $M^u$ is applied, and traversing a branch represents transforming the set by the corresponding mapping gate. Then, a node would be terminal when either the entire set $T$ is mapped to gates in $N^u$, or if we reach some maximal number of actions allowed. Due to the considerable number of possible actions one could take at each step ($4N-2$), finding the shortest path by brute force is clearly not an option as this would result in a significantly slow and possibly memory-intensive search in order to keep track of the branches traversed. In order to find solutions without traversing the entire tree, a policy may be used to give nodes traversal priority. 

The Monte Carlo tree search (MCTS) method is an application of the Monte Carlo method whereby nodes are explored based on weighted random sampling. MCTS consists of four main steps: 1) \textit{Selection}: Starting from the root node, child nodes are selected based on the value given to the node until a leaf node is reached. 2) \textit{Expansion}: If the leaf node reached is not terminal, then the corresponding children nodes are created and one of them is selected. 3) \textit{Simulation}: Starting from the leaf node reached, a `playout' is performed. This playout simulates the rest of the game based on some strategy, which can be as simple as randomly choosing mapping gates until either the entire set $T$ is successfully mapped or the maximum number of actions is reached. 4) \textit{Backpropagation}: Using the result from the playout, the value of the nodes traversed to reach the leaf node is updated accordingly.

In order to decide which node to select next, we use the Upper Confidence bounds for Trees (UCT), which, for node $T_j$ is given by

\begin{equation}
UCT(T_j) = V_j + C \sqrt{\frac{\ln(n)}{n_j}}~,
\end{equation}
where $V_j$ is a cumulative or average reward of the node representing exploitation, $C$ is the exploration parameter, and $n$ and $n_j$ are the number of times the parent and child node were visited, respectively. In our case, we chose to calculate the reward at the simulation stage based on the naive reward. More specifically, if the number of transformations, or actions, required for the naive simultaneous solution at the node $T_j$ is $N_j$, then the payout reward $R_j$ at the node $T_j$ is given by 

\begin{equation}
R_j = \max\left(0, 1 - \frac{N_j}{N_0} \right)
\end{equation}
where $N_0$ is the number of transformations for the naive simultaneous solution at the root node, i.e. for the initial target set $T$. Whilst this reward scheme gave better results than the random payout scheme, where the actions at the simulation stage are chosen randomly, it does require knowledge (and more importantly the existence) of the naive solution. Introducing deterministic patterns in an otherwise purely random playout may lead to an improved state space search \cite{gelly2006} and, whilst deterministic strategies may lead to over-selective searches of the state space \cite{guillaume2008}, a greedy strategy may be optimal in noiseless environments \cite{steven2016}.


We ran experiments for the 5 different sets $T$ of size $|T|=8$ on 4 qubits. In each experiment, we count the total number of evaluations. For MCTS, the total number of evaluations is obtained by tracking the maximal tree depth $D_{max}(e)$ reached in each episode $e$ and the length of the naive solution $D_{play}(e)$ used in each playout. The total number of evaluations is then given by $N_{MCTS}=\sum_{e=1}^{E}(D_{max}(e) +D_{play}(e))$, given the total number of episodes $E$.  The experiment is terminated once 100 solutions are found. This results in a total number of evaluations of $N_{MCTS}\approx 4\cdot10^{5}$.
In Table~\ref{tab:mcts_4q} we show the results for five target sets containing 8 Pauli operators on 4 qubits, for which the exploration parameter was chosen to be $c = 85$ after a coarse-grained parameter sweep. The MCTS approach seems to give better results than SA but does not outperform the RL agents.

\begin{table}[t!]
\begin{center}
\caption{Results for 5 target sets on 4 qubits mapped to native gates using MCTS. Whilst not outperforming the RL agents, this method still produces significantly shorter action sequences than the ones obtained using the naive strategy. The exploration parameter was set to $c = 85$.  All values are denoted in percent of the naive individual solution $N_{ind}$. For comparison, the number of mapping gates in the naive simultaneous ($N_{sim}$) and individual ($N_{ind}$) solutions are provided, which are averaged over 100 random orderings of the target gate set.}\label{tab:mcts_4q}
\pgfplotstabletypeset[
col sep=semicolon,
string type,
columns={seed, naivelen, naivelenOG, actionlen0, actionlenred0},
columns/seed/.style={column name=\textbf{T},  column type={|c|} },
columns/naivelen/.style={column name=$\mathbf{N_{sim}}$},
columns/actionlen0/.style={column name=$\mathbf{MCTS^f}$ },
columns/naivelenOG/.style={column name=$\mathbf{N_{ind}}$, column type={|c|}},
columns/actionlenred0/.style={column name=$\mathbf{MCTS^c}$},
every head row/.style={before row={\hline},after row=\hline\hline},
every last row/.style={after row={\hline}},
every nth row={1}{before row=\hline},
column type/.add={|}{},
every last column/.style={column type/.add={}{|}},
]{TablesNew/MCTS_table.csv}
\end{center}

\end{table}




\end{document}